\numberwithin{equation}{section}
\newtheorem{theorem}{Theorem}[section]
\newtheorem{proposition}{\bf Proposition}[section]
\newtheorem{corollary}{\hskip\parindent\bf Corollary}[section]
\begin{document}
	\title{Metapopulation dynamics of a respiratory disease with infection during travel}
	\author[IISc]{Indrajit Ghosh \footnote{Corresponding author. Email: indra7math@gmail.com}}
	\author[ISI]{Sk Shahid Nadim}
	\author[IISc]{Soumyendu Raha}
	\author[IISc]{Debnath Pal}
	\address[IISc]{Department of Computational and Data Sciences, Indian Institute of Science, Bengaluru-560012, Karnataka, India}
    \address[ISI]{Agricultural and Ecological Research Unit, Indian Statistical Institute, Kolkata - 700 108, West Bengal, India}

\begin{abstract}
	We formulate a compartmental model for the propagation of a respiratory disease in a patchy environment. The patches are connected through the mobility of individuals, and we assume that disease transmission and recovery are possible during travel. Moreover, the migration terms are assumed to depend on the distance between patches and the perceived severity of the disease. The positivity and boundedness of the model solutions are discussed. We analytically show the existence and global asymptotic stability of the disease-free equilibrium. We study three different network topologies numerically and find that underlying network structure is crucial for disease transmission. Further numerical simulations reveal that infection during travel has the potential to change the stability of disease-free equilibrium from stable to unstable. The coupling strength and transmission coefficients are also very crucial in disease propagation. Different exit screening scenarios indicate that the patch with the highest prevalence may have adverse effects but other patches will be benefited from exit screening. Furthermore, while studying the multi-strain dynamics, it is observed that two co-circulating strains will not persist simultaneously in the community but only one of the strains may persist in the long run. Transmission coefficients corresponding to the second strain are very crucial and show threshold like behavior with respect to the equilibrium density of the second strain.

\end{abstract}

\begin{keyword}
	Epidemic model, Infection during travel, Metapopulation, Stability analysis, Numerical simulations
\end{keyword}

\maketitle

\section{Introduction}\label{section1}
Many epidemic outbreaks such as 1918 pandemic influenza, 2002-2003 SARS outbreak, 2009 H1N1 influenza epidemic, 2012 - 2015 MERS-CoV outbreak, 2015 Ebola virus epidemic and ongoing COVID-19 pandemic indicate that increasing connectivity has significantly amplified the impact of these diseases. Respiratory diseases such as TB, influenza, COVID-19, etc. are mainly caused by direct contact, indirect contact, respiratory droplets or airborne transmission. Direct contact refers to the human-to-human physical contact and indirect contact indicate contact through intermediate object such as door knob, sitting bench etc. Respiratory droplets are exhaled from infectious humans while sneezing or coughing. These droplets can then be deposited on a healthy human's mucus or conjunctiva. Another transmission path is the airborne one which involves infectious pathogens travelling through air from an infected host to a healthy one \cite{brankston2007transmission,arino2016revisiting}. However, pathogens may not only transmit within patches but also during travel. As during travel, individuals are put in a close proximity for a significant duration, it is highly likely that the probability of transmission increases with duration of transport or equivalently distance travelled. Therefore the infection during travel may be negligible in short-range transport but in the case of long-range transport infection during travel may have a significant impact. Transmission of respiratory diseases during travel has been reported in many instances. In a report by the European Centre for Disease prevention and Control stated that TB, measles and seasonal influenza are transmissible during commercial flights \cite{knipl2016stability}. WHO confirmed influenza transmission during long distance train travels \cite{furuya2007risk}. Mangili and Gendreau  \cite{mangili2005transmission} reviewed about transmission while travelling in aircraft, and infection spread within cars has also been studied \cite{knibbs2012risk}. Consequently, the infection during travel is important to study. The specific impact of infection while traveling is not well understood. Specifically, in the case of pandemic outbreaks (for instance influenza, COVID-19 etc.) the virus is highly infectious and may intensify the overall disease burden if long-range travels are not supervised. 

Mathematical modeling can shed light on the potential impact of infection during transportation on the global burden of respiratory illness. To analyze the transmission patterns of these diseases without infection during travel, various metapopulation models have been studied in the literature (see \cite{arino2017spatio, lloyd2004spatiotemporal,bahl2011temporally} and references therein). Despite its importance, the consideration of infection during transport in these metapopulation models is not yet emerged as a very active research area. Previously, various authors considered single patch and inflow of infected individuals. A disease transmission model in a single patch with inflow of infectious agents was considered \cite{brauer2001models}. Guo and coauthors investigated problems associated with the inflow of people infected with TB \cite{guo2011global}. The problem has been discussed by some authors in the context of delay differential equations to incorporate precise travel time between patches \cite{liu2008modeling,nakata2011global,knipl2016stability}. A few basic SIS (Susceptible-Infected-Susceptible) models (ordinary differential equation) are studied with infection during travel \cite{takeuchi2006spreading,takeuchi2007global,arino2016revisiting}. Takeuchi and coauthors \cite{takeuchi2006spreading,takeuchi2007global} presented SIS patch models with infection during transport. But they made a simplifying assumption that all model parameters are the same in both patches. However, Arino and coauthors \cite{arino2016revisiting} modified an earlier model of Takeuchi and coauthors \cite{takeuchi2006spreading} to allow all the parameters to be different in two patches.
However, SIS-type models are inadequate for respiratory diseases with high infectiousness (eg. influenza, SARS, measles and COVID-19). For instance, respiratory diseases such as measles \cite{bolker1993chaos}, COVID-19 \cite{wu2020nowcasting}, tuberculosis \cite{liu2010tuberculosis}, influenza A \cite{etbaigha2018seir} were modelled using an SEIR (Susceptible-Exposed-Infected-Recovered) system. This indicate that SEIR type models are necessary to model these respiratory diseases. Therefore, we extend the metapopulation models to SEIR version theoretically. Due to the consideration of SEIR system, travel related recovery and incubation period are also important to model in this context. Another realistic modification will consider the migration rates depending on the distance between patches and perceived severity of the disease. To do so, we may assume the migration rates to be inversely proportional to the distance between patches and inversely proportional to the perceived severity of the disease. These two assumptions are realistic but they were not considered in the context of infection during travel.
Moreover, respiratory diseases are prone to mutation and multiple strains often emerge in the population \cite{lyons2018mutation,garcia2021multiple}. Thus, it is important to incorporate multi-strain dynamics in the model under investigation. Motivated by the above discussion, we study the dynamics of a metapopulation model with infection during travel in different scenarios. 

The rest of the paper is organized as follows: in Sect. \ref{section2}, the assumptions for the metapopulation model are stated and the corresponding model is formulated; in Sect. \ref{section3}, some basic mathematical properties of the model are deduced; in Sect. \ref{section4}, different numerical simulations are performed to get insight into the transmission dynamics and finally in Sect. \ref{section5}, a brief discussion about the results is presented.

\section{Model description}\label{section2}
The following assumptions are made while formulating the model.
\begin{itemize}
    \item {We consider n disjoint patches with total populations $N_i$, i=1,2,...,n. The total population in each patch are subdivided into four compartments namely, Susceptible ($S$), Exposed ($E$), Infected ($I$) and Recovered ($R$).}
    \item {There is homogeneous mixing between people. Within each patch standard incidence rate of disease transmission is occurring, i.e, force of infection = $\frac{\beta_i S_i I_i}{N_i}$.}
    \item {Newly recruited people in each patch are susceptible in their respective patch. Individuals in each patch die at natural death rate. Disease induced death rate is considered only in compartments $I_i$. Exposed people will progress to infectious compartment after an incubation period in each patch.}
    \item {During travel new infection take place at rate $\frac{\alpha_j m_{ij} S_j I_j}{N_j}$, $m_{ij}$ is the rate of travel of each compartment from patch j to patch i \cite{arino2016revisiting}.}
    \item {While travelling, exposed people may become infectious and infected people will also recover at certain rates.}
    \item {The migration rates between patches are assumed to be inversely proportional to the distance between patches as well as the perceived severity of the disease. Thus, $m_{ij}=\epsilon \frac{1}{d_{ij}^{\eta}  \nu_j}$, where $d_{ij}$ is the distance between patch i and patch j, $\eta$ governs the distance dependency of migration rate, $\nu_j \geq 1$ measure the perceived severity of the disease. Note that, when $\nu_j = 1$ the migration rates do not depend on the disease severity.}
\end{itemize}

The SEIR metapopulation model based on the above assumptions take take the following form

\begin{eqnarray}\label{EQ:eqn 2.1}
\displaystyle{\frac{dS_i}{dt}} &=& \Pi_i- \beta_i \frac{I_i}{N_i} S_i - \mu_i S_i - \sum_{j=1}^{n} m_{ji} S_i + \sum_{j=1}^{n} m_{ij} (1-\frac{\alpha_j I_j}{N_j}) S_j,\nonumber \\
\displaystyle{\frac{dE_i}{dt}} &=& \beta_i \frac{I_i}{N_i} S_i -(\gamma_i+\mu_i)E_i - \sum_{j=1}^{n} m_{ji} E_i + \sum_{j=1}^{n} m_{ij} \frac{\alpha_j I_j}{N_j} S_j + \sum_{j=1}^{n} (1-\xi_i)m_{ij} E_j, \\
\displaystyle{\frac{dI_i}{dt}} &=& \gamma_i E_i - (\sigma_i+ \mu_i +\delta_i)I_i - \sum_{j=1}^{n} m_{ji} I_i + \sum_{j=1}^{n} \xi_i m_{ij} E_j + \sum_{j=1}^{n} (1-p_i) m_{ij} I_j, \nonumber \\
\displaystyle{\frac{dR_i}{dt}} &=& \sigma_i I_i - \mu_i  R_i - \sum_{j=1}^{n} m_{ji} R_i + \sum_{j=1}^{n} m_{ij} R_j + \sum_{j=1}^{n} p_i m_{ij} I_j, \nonumber
\end{eqnarray}

where, $m_{ii}=m_{jj}=0$ and $m_{ij}=\epsilon \frac{1}{d_{ij}^{\eta} \nu_j}$. A flow diagram of a source-destination patch version of model \eqref{EQ:eqn 2.1} is illustrated in Fig. \ref{fig:flow_diagram}.

\begin{figure}
    \centering
    \includegraphics[width=0.75\textwidth]{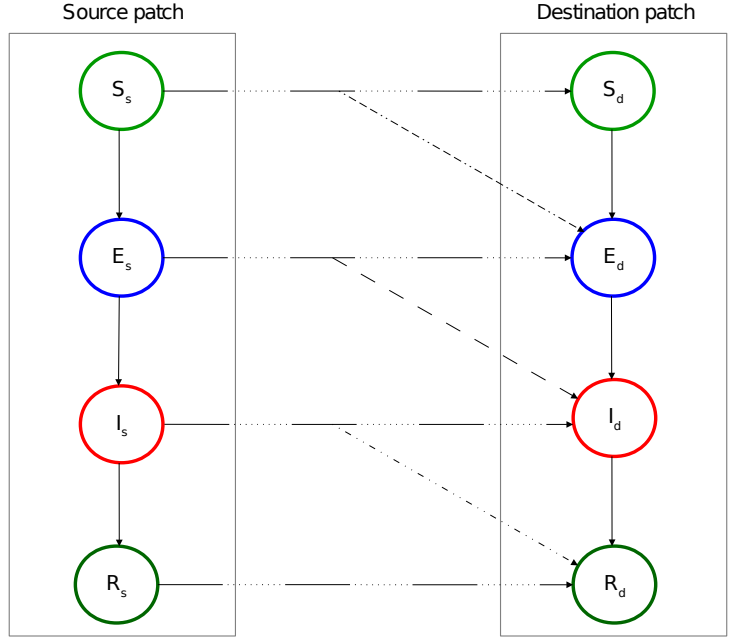}
    \caption{Flow chart of a source-destination version of SEIR metapopulation model with infection during travel. Solid black arrows depict usual progressions of a compartmental SEIR model and black arrows with fine dots are used to show migration of people. The black dashed arrow (2 dots 3 dashes) represents newly exposed persons during travel, the black dashed line represents the persons becoming infected from exposed and the black dashed line (2 dots one dash) represents the recovery during travel.}
    \label{fig:flow_diagram}
\end{figure}

\section{Mathematical analysis}{\label{section3}}
\subsection{Positivity and boundedness of the solution}
This subsection is provided to prove the positivity and boundedness of solutions of the system \eqref{EQ:eqn 2.1} with initial conditions $(S_i(0),E_i(0),I_i(0),R_i(0))^T\in \mathbb{R}_{+}^{4n}$.

\begin{proposition}
	The system \eqref{EQ:eqn 2.1} is positively invariant in $\mathbb{R}_{+}^{4n}$.
\end{proposition}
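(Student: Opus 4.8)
The plan is to prove forward invariance of the nonnegative cone $\mathbb{R}_{+}^{4n}$ via a Nagumo-type subtangentiality check: on each face of the cone, the corresponding component of the vector field is nonnegative, so no trajectory can leave through the boundary. First I would record well-posedness. Assuming $N_i(0)>0$ for every $i$ (needed so that the incidence terms $\beta_iI_iS_i/N_i$ and $\alpha_jm_{ij}S_jI_j/N_j$ make sense), the right-hand side of \eqref{EQ:eqn 2.1} is $C^1$ on the open set $U=\{N_i>0,\ i=1,\dots,n\}$, hence locally Lipschitz, so there is a unique solution on a maximal interval $[0,T_{\max})$; by construction this solution remains in $U$, so $N_i(t)>0$ throughout $[0,T_{\max})$ for free.

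Next I would carry out the boundary sign computations, setting one variable to zero and keeping the rest nonnegative. If $S_i=0$ then $dS_i/dt=\Pi_i+\sum_j m_{ij}\bigl(1-\tfrac{\alpha_j I_j}{N_j}\bigr)S_j\ge\Pi_i\ge 0$, using $0\le I_j\le S_j+E_j+I_j+R_j=N_j$ on the cone together with the modelling assumption $\alpha_j\le 1$. If $E_i=0$ then $dE_i/dt=\beta_i\tfrac{I_i}{N_i}S_i+\sum_j m_{ij}\tfrac{\alpha_j I_j}{N_j}S_j+\sum_j(1-\xi_i)m_{ij}E_j\ge 0$ since $\xi_i\le 1$. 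If $I_i=0$ then $dI_i/dt=\gamma_iE_i+\sum_j\xi_im_{ij}E_j+\sum_j(1-p_i)m_{ij}I_j\ge 0$ since $p_i\le 1$. If $R_i=0$ then $dR_i/dt=\sigma_iI_i+\sum_j p_im_{ij}I_j\ge 0$. Equivalently, each component of the field can be written as $P_\ell(x)-x_\ell Q_\ell(x)$ with $P_\ell,Q_\ell$ continuous and nonnegative on $\mathbb{R}_{+}^{4n}$, i.e.\ the field is essentially nonnegative.

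To convert these sign conditions into the invariance statement I would use a first-exit-time argument: if $t_1:=\inf\{t\in(0,T_{\max}):x(t)\notin\mathbb{R}_{+}^{4n}\}<T_{\max}$, then $x\ge 0$ on $[0,t_1]$, $x(t_1)\ge 0$, and some coordinate $x_\ell(t_1)=0$; on $[0,t_1]$ the maps $s\mapsto P_\ell(x(s))$ and $s\mapsto Q_\ell(x(s))$ are nonnegative and bounded, so the variation-of-constants representation $x_\ell(t)=x_\ell(0)e^{-\int_0^tQ_\ell}+\int_0^tP_\ell(s)e^{-\int_s^tQ_\ell}\,ds$ keeps $x_\ell\ge 0$ slightly beyond $t_1$ and contradicts the choice of $t_1$; alternatively one simply invokes the standard invariance lemma for essentially nonnegative (quasi-positive) vector fields. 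Summing the four equations of patch $i$ gives $dN_i/dt=\Pi_i-\mu_iN_i-\delta_iI_i-\sum_j m_{ji}N_i+\sum_j m_{ij}N_j\ge-(\mu_i+\delta_i+\sum_j m_{ji})N_i$ on the cone, re-confirming that positivity of $N_i$ is preserved; summing over $i$ the migration terms cancel and $dN/dt\le\sum_i\Pi_i-(\min_i\mu_i)N$, the estimate needed for the companion boundedness claim and for global existence.

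The genuine difficulty here is not the sign bookkeeping but the circularity lurking in the $1/N_j$ terms: the inequality $I_j\le N_j$ used on the $S_i$-face already presupposes nonnegativity of the other patch-$j$ compartments, and $N_j$ must stay positive wherever the field is evaluated. Both are handled as above — nonnegativity of all $4n$ coordinates is established simultaneously on the single interval $[0,t_1]$, and positivity of the $N_i$ is automatic since the maximal solution never leaves $U$.
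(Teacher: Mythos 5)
Your proposal is correct and follows essentially the same route as the paper: check that on each face $S_i=0$, $E_i=0$, $I_i=0$, $R_i=0$ the corresponding component of the vector field is nonnegative, then invoke a standard invariance result for quasi-positive (essentially nonnegative) fields, which is exactly the role played by the cited Theorem 3.1 of Sun et al.\ in the paper. Your additional care about well-posedness on $\{N_i>0\}$, the implicit assumptions $\alpha_j\le 1$, $\xi_i\le 1$, $p_i\le 1$ together with $I_j\le N_j$, and the first-exit-time argument simply fills in details the paper leaves to the citation, rather than constituting a different approach.
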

\begin{proof}
    To demonstrate the solution's positivity, it is sufficient to prove that each of the positive orthant's faces cannot be crossed, implying that the vector field points inward on the boundary of $\mathbb{R}_{+}^{4n}$.\\
	By re-writing the system \eqref{EQ:eqn 2.1} we have
	\begin{eqnarray}
	\frac{dX_i}{dt} &= F(X_i(t), X_0), X_{i0}\geq 0
	\label{EQ:eqn 2.2}
	\end{eqnarray}
	$ F(X_i(t))=(F_1(X_i),F_2(X_i),F_3(X_i),F_4(X_i))^T$\\
	We note that
	\begin{align*}
	\frac{dS_i}{dt}|_{S_i=0}&=\Pi_i+ \sum_{j=1}^{n} m_{ij} (1-\frac{\alpha_j I_j}{N_j})\geq 0,\\
	\frac{dE_i}{dt}|_{E_i=0}&= \beta_i \frac{I_i}{N_i} S_i + \sum_{j=1}^{n} m_{ij} \frac{\alpha_j I_j}{N_j} S_j + \sum_{j=1}^{n} (1-\xi_i)m_{ij} E_j \geq 0,\\
	\frac{dI_i}{dt}|_{I_i=0}&=\gamma_i E_i  + \sum_{j=1}^{n} \xi_i m_{ij} E_j + \sum_{j=1}^{n} (1-p_i) m_{ij} I_j \geq 0,\\
	\frac{dR_i}{dt}|_{R_i=0}&=\sigma_i I_i  + \sum_{j=1}^{n} m_{ij} R_j + \sum_{j=1}^{n} p_i m_{ij} I_j\geq 0.
	\end{align*}
	Then it follows from Theorem 3.1 of Sun et al. \cite{sun2011effect} that all the $4n$ state variables remain non-negative for all time. Hence, $\mathbb{R}_{+}^{4n}$ is a positively invariant set for the system \eqref{EQ:eqn 2.1}.
\end{proof}

\begin{proposition}
    The solutions $(S_i(t), E_i(t), I_i(t), R_i(t))$ of the model \eqref{EQ:eqn 2.1} are ultimately and uniformly bounded in $\mathbb{R}_{+}^{4n}$.
\end{proposition}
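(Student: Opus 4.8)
The plan is to work with the total population of each patch, $N_i = S_i + E_i + I_i + R_i$, and ultimately with the global total $N = \sum_{i=1}^n N_i$. First I would add the four equations of system \eqref{EQ:eqn 2.1} for a fixed patch $i$. All the travel terms involving $m_{ji}$ (outflow from patch $i$) contribute $-\sum_{j} m_{ji} N_i$, and all the inflow terms involving $m_{ij}$ contribute $+\sum_j m_{ij}(S_j + E_j + I_j + R_j) = \sum_j m_{ij} N_j$; note in particular that the infection-during-travel terms $\pm \alpha_j m_{ij} I_j S_j / N_j$ and the recovery/progression-during-travel terms $\pm \xi_i m_{ij} E_j$, $\pm p_i m_{ij} I_j$ cancel in pairs between compartments within the same patch. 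What survives is $\frac{dN_i}{dt} = \Pi_i - \mu_i N_i - \delta_i I_i - \sum_j m_{ji} N_i + \sum_j m_{ij} N_j$.

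Next I would sum over $i = 1, \dots, n$. The migration terms now telescope: $\sum_i \big(-\sum_j m_{ji} N_i + \sum_j m_{ij} N_j\big) = 0$, since each ordered pair $(i,j)$ appears once with a minus sign and once with a plus sign. Hence $\frac{dN}{dt} = \sum_i \Pi_i - \sum_i \mu_i N_i - \sum_i \delta_i I_i \le \Pi - \mu_{\min} N$, where $\Pi := \sum_i \Pi_i$ and $\mu_{\min} := \min_i \mu_i > 0$, using $\delta_i I_i \ge 0$ (positivity from the previous Proposition) and $\mu_i N_i \ge \mu_{\min} N_i$. A standard differential-inequality / comparison argument (Gr\"onwall) then gives $\limsup_{t\to\infty} N(t) \le \Pi/\mu_{\min}$, and more precisely $N(t) \le \max\{N(0), \Pi/\mu_{\min}\}$ for all $t \ge 0$, so $N(t)$ is uniformly bounded. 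Since every state variable is non-negative and bounded above by $N(t)$, each of $S_i, E_i, I_i, R_i$ is ultimately and uniformly bounded, and the region $\{(S_i,E_i,I_i,R_i) \in \mathbb{R}_+^{4n} : \sum_i N_i \le \Pi/\mu_{\min}\}$ is a positively invariant attracting set.

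The only genuinely delicate point is verifying that all the travel-related cross terms cancel exactly when the compartments within a patch are summed and then that the migration terms telescope across patches — this is a bookkeeping step where a sign error in the model would break the argument, so I would write out the cancellation explicitly for the within-patch sum before invoking the telescoping. A minor technical caveat is that the incidence terms contain $N_j$ in the denominator; this is harmless on the positively invariant set as long as one checks that $N_j(t) > 0$ is preserved (which follows once $\Pi_j > 0$, or is assumed as part of the standing hypotheses on initial data), so the vector field is well defined and the comparison argument applies without obstruction.
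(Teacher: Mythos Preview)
Your proposal is correct and follows essentially the same route as the paper's own proof: sum the four compartment equations within each patch to obtain an equation for $N_i$, then sum over $i$ so that the migration terms cancel, drop the nonnegative $\delta_i I_i$ contribution, and apply a comparison argument with $\mu_{\min}=\min_i\mu_i$ to bound $N(t)$ by $\Pi/\mu_{\min}$. If anything, your write-up is slightly more careful than the paper's, since you explicitly note the within-patch cancellation of the travel-related cross terms and the uniform (not merely ultimate) bound $N(t)\le\max\{N(0),\Pi/\mu_{\min}\}$.
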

\begin{proof}
    Let $N_i(t) = S_i(t)+E_i(t)+I_i(t)+R_i(t)$ be the entire human population at time $t$ in patch $i$ and in all patches, $N=\sum_{i=1}^{n} N_{i}$ be the entire population. Also we define $\Pi=\sum_{i=1}^{n} \Pi_{i}$ and $\mu=min_{1\leq i \leq n} \mu_{i}$.\\
    Now adding $4n$ equations we have 
    \begin{align*}
	&\frac{dN_i}{dt}=\Pi_i-\mu_i N_i-\delta_i I_i - \sum_{X=S,E,I,R} \Big(\sum_{j=1}^{n} m_{ji} X_i-\sum_{j=1}^{n} m_{ij} X_j\Big) \\
	\end{align*}
	Now, if we add $i$ from $1$ to $n$, we have,
	\begin{align*}
	&\frac{dN}{dt}=\sum_{i=1}^{n}\Big[\Pi_i-\mu_i N_i-\delta_i I_i - \sum_{X=S,E,I,R} \Big(\sum_{j=1}^{n} m_{ji} X_i-\sum_{j=1}^{n} m_{ij} X_j\Big)\Big]
	\end{align*}
	In the preceding equation, the double summation is vanished and since $I_i \leq N_i$ , it follows that
	\begin{align*}
	    \frac{dN}{dt}&\leq \sum_{i=1}^{n}\Pi_i-\sum_{i=1}^{n}\mu_i N_i\\
	    &\leq \sum_{i=1}^{n}\Pi_i-\sum_{i=1}^{n}min_{1\leq i \leq n} \mu_{i} N_i
	\end{align*}
	Hence by standard comparison theorem,
	\begin{align*}
	    N(t) \leq \Big\lbrace \frac{\sum_{i=1}^{n}\Pi_i}{\sum_{i=1}^{n}min_{1\leq i \leq n} \mu_{i}} \Big\rbrace \Longrightarrow N(t)\leq \frac{\Pi}{\mu}
 	\end{align*}
 	Therefore the solutions of the model \eqref{EQ:eqn 2.1} are ultimately and uniformly bounded in $\mathbb{R}_{+}^{4n}$.
\end{proof}

\begin{corollary}
The region $\Omega=\lbrace (S_i, E_i, I_i, R_i)\in \mathbb{R}_{+}^{4n}|N\leq\frac{\Pi}{\mu}\rbrace$ is invariant and attracting for system \eqref{EQ:eqn 2.1}.
\end{corollary}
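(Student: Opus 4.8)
The plan is to derive the corollary directly from the two preceding propositions together with a slightly sharpened version of the comparison estimate already obtained in the proof of Proposition 3.2. First I would prove forward invariance of $\Omega$. Positivity of all $4n$ state variables is provided by Proposition 3.1, so the orthant $\mathbb{R}_{+}^{4n}$ is invariant and it only remains to keep $N$ below $\Pi/\mu$. Reusing the identity for $\frac{dN}{dt}$, the migration contributions cancel after summing over $i$ (each outflow term $\sum_j m_{ji}X_i$ is matched by the inflow term $\sum_j m_{ij}X_j$ upon relabelling the summation indices $i\leftrightarrow j$), and the bound $I_i\le N_i$ together with $\mu=\min_{1\le i\le n}\mu_i$ yields the scalar differential inequality $\frac{dN}{dt}\le \Pi-\mu N$. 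Evaluating the right-hand side on the face $N=\Pi/\mu$ gives $\frac{dN}{dt}\big|_{N=\Pi/\mu}\le \Pi-\mu\cdot\frac{\Pi}{\mu}=0$, so a trajectory with $N(0)\le\Pi/\mu$ cannot cross this face; intersecting with the positively invariant orthant shows that $\Omega$ is invariant.

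Next I would establish that $\Omega$ is attracting. Applying the standard comparison theorem to $\frac{dN}{dt}\le\Pi-\mu N$ for arbitrary initial data in $\mathbb{R}_{+}^{4n}$ gives
\begin{equation*}
N(t)\le \frac{\Pi}{\mu}+\Big(N(0)-\frac{\Pi}{\mu}\Big)e^{-\mu t},
\end{equation*}
hence $\limsup_{t\to\infty}N(t)\le\Pi/\mu$. Since solutions remain in $\mathbb{R}_{+}^{4n}$, the only obstruction to membership in $\Omega$ is $N>\Pi/\mu$, and the estimate above shows that in that case $N(t)$ decreases toward $\Pi/\mu$; consequently the distance from any solution to $\Omega$ tends to zero as $t\to\infty$. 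Combining this with the invariance established above proves the corollary.

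The argument is essentially routine, being little more than a repackaging of Propositions 3.1 and 3.2. The only points deserving a line of care are the verification that the migration double sum genuinely vanishes after summing over all patches, and the passage from the patch-specific death rates $\mu_i$ to the uniform rate $\mu=\min_i\mu_i$ needed to close the scalar inequality for $N$; beyond these bookkeeping steps no real obstacle is anticipated.
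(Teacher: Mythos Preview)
Your proposal is correct and matches the paper's intent: the corollary is stated there without proof, as an immediate consequence of Propositions~3.1 and~3.2, and your argument simply makes explicit the two ingredients (positivity plus the scalar comparison inequality $N'\le \Pi-\mu N$) that those propositions supply. There is nothing to add.
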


\subsection{Stability of disease-free equilibrium and basic reproduction number}
Let $M=[m_{ij}]$ be the travel rate matrix for each compartments, is assumed to be 
irreducible. The matrix $M$ represents the human movement among patches.

Define $n \times n$ matrices
\begin{align*}
    \psi^{X}= \text{diag}(\mu_i^{X} + \sum_{j=1}^{n} m_{ji})-M
\end{align*}
where, $\mu_i^{S}=\mu_i$, $\mu_i^{E}=\gamma_i+\mu_i$,  $\mu_i^{I}=\sigma_i+\delta_i+\mu_i$ and  $\mu_i^{R}=\mu_i$.

For example,
\begin{align}\label{EQ:eqn 3.1}
 \psi^R =\begin{bmatrix}
    \mu_1 + \sum_{j=1}^{n} m_{j1} & -m_{12} & \cdots & - m_{1n}\\
    -m_{21} & \mu_2 + \sum_{j=1}^{n} m_{j2} &\cdots& -m_{2n}\\
    \vdots  & \vdots  & \ddots & \vdots\\
    - m_{n1}  &  - m_{n2} & \cdots &  \mu_n + \sum_{j=1}^{n} m_{jn}\\
    \end{bmatrix}
\end{align}
The matrices above are non-singular M-matrices since all off-diagonal entries are nonpositive (i.e., of the Z-sign pattern) and the sum of the entries in each column is positive and $\psi^{{X}}\geq 0$ \cite{berman1994nonnegative}, where $X=S,E,I,R$.
The phase at which no disease exists in the population is referred to as disease free equilibrium (DFE). For this we set $E_i=0, I_i=0$ in the equation \eqref{EQ:eqn 2.1}. Then the system of equation \eqref{EQ:eqn 2.1} reduces to:
\begin{eqnarray}\label{EQ:eqn 3.11}
\displaystyle{\frac{dS_i}{dt}} &=& \Pi_i - \mu_i S_i - \sum_{j=1}^{n} m_{ji} S_i + \sum_{j=1}^{n} m_{ij}  S_j, \\
\displaystyle{\frac{dR_i}{dt}} &=& - \mu_i  R_i - \sum_{j=1}^{n} m_{ji} R_i + \sum_{j=1}^{n} m_{ij} R_j , \nonumber
\end{eqnarray}
We take the following equations to calculate the DFE:
\begin{eqnarray}\label{EQ:eqn 3.12}
0 &=& \Pi_i - \mu_i S_i - \sum_{j=1}^{n} m_{ji} S_i + \sum_{j=1}^{n} m_{ij}  S_j,\\
0 &=& - \mu_i  R_i - \sum_{j=1}^{n} m_{ji} R_i + \sum_{j=1}^{n} m_{ij} R_j , \nonumber
\end{eqnarray}
This can be expressed as a matrix form:
\begin{align}\label{EQ:eqn 3.13}
    \psi^S S&=\Pi_i\\
    \psi^R R& =0\nonumber
\end{align}
where, $\psi^S=\text{diag}(\mu_i + \sum_{j=1}^{n} m_{ji})-M$, $\psi^R=\text{diag}(\mu_i + \sum_{j=1}^{n} m_{ji})-M$, $M$ is an irreducible movement matrix and $S=(S_1, S_2,......, S_n)^T$, $R=(R_1, R_2,......, R_n)^T$, $\Pi=(\Pi_1, \Pi_2,......, \Pi_n)^T$.

We can see that the matrices $\psi^S$ and $\psi^R$ have non-positive off-diagonal components and the total sum of each column's entries is positive. As a result, both $\psi^S$ and $\psi^R$ are non-singular M-matrices and have positive inverses \cite{berman1994nonnegative}. As a consequence, the second equation of \eqref{EQ:eqn 3.12} has trivial solution, while the first equation of \eqref{EQ:eqn 3.12} has a unique positive solution $S^0=(S_1^0, S_2^0,......, S_n^0)=(\psi^S)^{-1}\Pi$.

As a consequence, we get the following result.
\begin{theorem}
There exits a DFE $P^0=(S_1^0, 0, 0, 0, S_2^0, 0, 0, 0, ...., S_n^0, 0, 0, 0)\in \mathbb{R}_{+}^{4n}$ for the system \eqref{EQ:eqn 2.1} which is unique.
\end{theorem}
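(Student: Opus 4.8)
The plan is to show that imposing $E_i = I_i = 0$ for all $i$ is consistent with the full system \eqref{EQ:eqn 2.1}, that this reduces the equilibrium problem to the linear system \eqref{EQ:eqn 3.13}, and then to read off existence and uniqueness from the $M$-matrix structure of $\psi^S$ and $\psi^R$ already established above.

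First I would verify consistency. Substituting $E_i = I_i = 0$ into the second and third equations of \eqref{EQ:eqn 2.1}, every term on each right-hand side is a multiple of some $E_j$ or $I_j$ and therefore vanishes identically, so $dE_i/dt = dI_i/dt = 0$ holds automatically. The first and fourth equations then collapse to \eqref{EQ:eqn 3.12}, equivalently to $\psi^S S = \Pi$ and $\psi^R R = 0$ as in \eqref{EQ:eqn 3.13}. Hence a point $P^0$ of the stated form is an equilibrium of \eqref{EQ:eqn 2.1} if and only if its $S$- and $R$-components solve \eqref{EQ:eqn 3.13}, and it suffices to analyse that linear system.

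Next I would invoke the $M$-matrix facts. Since $\psi^R$ is a nonsingular $M$-matrix (nonpositive off-diagonal entries and positive column sums, cf. \cite{berman1994nonnegative}), the homogeneous equation $\psi^R R = 0$ has only the trivial solution $R = 0$. Likewise $\psi^S$ is a nonsingular $M$-matrix, hence invertible with $(\psi^S)^{-1} \geq 0$; because the movement matrix $M$ is irreducible, $\psi^S$ is irreducible as well, so in fact $(\psi^S)^{-1}$ is entrywise positive. Therefore $\psi^S S = \Pi$ has the unique solution $S^0 = (\psi^S)^{-1}\Pi$, and since $\Pi$ has strictly positive entries, $S^0$ is componentwise strictly positive; in particular $N_i^0 = S_i^0 > 0$, so the standard-incidence terms in \eqref{EQ:eqn 2.1} are well defined at $P^0$. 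Assembling these facts, $P^0 = (S_1^0, 0, 0, 0, \ldots, S_n^0, 0, 0, 0) \in \mathbb{R}_{+}^{4n}$ is an equilibrium, it is the only equilibrium with $E_i = I_i = 0$, and by the boundedness results above it lies in $\Omega$.

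The routine parts are the consistency check and the matrix reformulation. The only point requiring a little care is arguing that $S^0$ is \emph{strictly} positive rather than merely nonnegative, which is precisely where irreducibility of $M$ enters, upgrading the nonnegative inverse of the nonsingular $M$-matrix $\psi^S$ to a positive one. I do not anticipate any serious obstacle, as the $M$-matrix properties of $\psi^S$ and $\psi^R$ have already been recorded in the preceding discussion.
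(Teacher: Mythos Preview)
Your proposal is correct and follows essentially the same route as the paper: the argument there is given in the discussion immediately preceding the theorem, reducing to $\psi^S S=\Pi$, $\psi^R R=0$ and invoking the nonsingular $M$-matrix property to obtain $R=0$ and the unique positive $S^0=(\psi^S)^{-1}\Pi$. Your version is slightly more explicit about the consistency check and about using irreducibility of $M$ to upgrade $(\psi^S)^{-1}\ge 0$ to strict positivity of $S^0$, but the underlying idea is identical.
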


\subsubsection{Basic Reproduction Number}
The system \eqref{EQ:eqn 2.1} has a unique disease-free equilibrium (DFE) and is given by
\begin{align*}
    P^0=(S_1^0, 0, 0, 0, S_2^0, 0, 0, 0, ...., S_n^0, 0, 0, 0) \in \mathbb{R}_{+}^{4n}.
\end{align*}

Following \cite{senapati2019cholera}, the matrix $(F)$ of  new infection and the matrix $(V)$ of transition terms are given below:
\begin{align*}
F=\begin{bmatrix}
\begin{array}{cc}
0 & F_{11}\\
0 & 0\\
\end{array}
\end{bmatrix}
\text{ and }
V=\begin{bmatrix}
\begin{array}{cc}
V_{11} & 0\\
-V_{21} & V_{22}\\
\end{array}
\end{bmatrix}\\
\end{align*}
where,
\begin{align*}
 F_{11}&=\begin{bmatrix}
\beta_1 & m_{12} \alpha_2 & m_{13} \alpha_3 & \cdots & m_{1n}\alpha_n \\
m_{21} \alpha_1 & \beta_2 & m_{23} \alpha_3 & \cdots & m_{2n}\alpha_n \\
\vdots  & \vdots & \vdots & \ddots & \vdots  \\
m_{n1} \alpha_1 &  m_{n2} \alpha_2 &  m_{n3} \alpha_3 & \cdots & \beta_n \\
\end{bmatrix}
% =\text{diag}(\beta_i)+[m_{ij}\alpha_j]_{j=1}^n
\\\\
    V_{11} &=\begin{bmatrix}
    (\gamma_1+\mu_1)+\sum_{j=1}^{n} m_{j1} & -(1-\xi_1)m_{12} & \cdots & -(1-\xi_1)m_{1n}\\
    -(1-\xi_2)m_{21} & (\gamma_2+\mu_2)+\sum_{j=1}^{n} m_{j2} &\cdots& -(1-\xi_2)m_{2n}\\
    \vdots  & \vdots  & \ddots & \vdots\\
    -(1-\xi_n)m_{n1}  & -(1-\xi_n)m_{n2} & \cdots & (\gamma_n+\mu_n)+\sum_{j=1}^{n} m_{jn}\\
    \end{bmatrix}\\\\
    V_{21}&=\begin{bmatrix}
    \gamma_1 & \xi_1 m_{12} & \cdots & \xi_1 m_{1n}\\
    \xi_2 m_{21} & \gamma_2 &\cdots& \xi_2 m_{2n}\\
    \vdots  & \vdots  & \ddots & \vdots\\
    \xi_n m_{n1}  &  \xi_n m_{n2} & \cdots &  \gamma_n\\
    \end{bmatrix}\\\\
    V_{22} &=\begin{bmatrix}
    (\sigma_1+\mu_1+\delta_1)+\sum_{j=1}^{n} m_{j1} & -(1-p_1)m_{12} & \cdots & -(1-p_1)m_{1n} \\
    -(1-p_2)m_{21} & (\sigma_2+\mu_2+\delta_2)+\sum_{j=1}^{n} m_{j2} & \cdots & -(1-p_2)m_{2n} \\
    \vdots  & \vdots  & \ddots & \vdots\\
    -(1-p_n)m_{n1} & -(1-p_n)m_{n2} & \cdots &  (\sigma_n+\mu_n+\delta_n)+\sum_{j=1}^{n} m_{jn}  \\
    \end{bmatrix}\\
\end{align*}
From the above we have seen that $F \geq 0$ and the Z-sign pattern is present in $V$. Since
\begin{align*}
V^{-1}=\begin{bmatrix}
\begin{array}{cc}
V_{11}^{-1} & 0\\
 V_{22}^{-1}V_{21} V_{11}^{-1}& V_{22}^{-1}\\
\end{array}
\end{bmatrix}\geq 0,\\
\end{align*}
V is a irreducible nonsingular M-matrix. Following \cite{van2002reproduction}, the basic reproduction number, denoted by $R_0$ is the spectral radius of the next generation matrix $FV^{-1}$, where,
\begin{align*}
    FV^{-1}=\begin{bmatrix}
\begin{array}{cc}
0 & F_{11}\\
0 & 0\\
\end{array}
\end{bmatrix}
\begin{bmatrix}
\begin{array}{cc}
V_{11}^{-1} & 0\\
 V_{22}^{-1}V_{21} V_{11}^{-1}& V_{22}^{-1}\\
\end{array}
\end{bmatrix}
\end{align*}
Therefore,
\begin{align}\label{basic reproduction number}
    R_0=\rho(FV^{-1})=\rho(F_{11}  V_{22}^{-1}V_{21} V_{11}^{-1})
\end{align}
where $\rho(X)$ represents the spectral radius of the matrix $X$.

To indicate that the expression for $R_0$ is dependent on the whole network, we refer to it as the domain basic reproduction number. According to Theorem 2 in \cite{van2002reproduction} that the DFE of the model \eqref{EQ:eqn 2.1} is locally asymptotically stable whenever $R_0<1$, while unstable if $R_0>1$.

In the specific case of $n=1$, the basic reproduction number has the explicit form 
\begin{align*}
    R_0=\frac{\beta \gamma}{(\gamma+\mu)(\sigma+\mu+\delta)}
\end{align*}
In the special case of no humans migrate between patches (i.e., $M=0$), the basic reproduction number $R_0$ defined in \ref{basic reproduction number} is represented by the maximum possible value of basic reproduction numbers $R_0$ in every patches. Therefore, $R_0=\max\limits_{i}\lbrace R_0^{(i)} \rbrace$\\
where
\begin{align}\label{Eq:3.13}
    R_0^{(i)}=\frac{\beta_i \gamma_i}{(\gamma_i+\mu_i)(\sigma_i+\mu_i+\delta_i)}        \text{ i=1,2,\dots,n}
\end{align}
\subsubsection{Bounds on $R_0$}
Let there is no human movements, i.e., $M=0$. Also let us assume that $\beta_i=\beta$, $\gamma_i=\gamma$, $\gamma_i+\mu_i=\gamma+\mu$ and $\sigma_i+\mu_i+\delta_i=\sigma+\mu+\delta$ for all $i$. Then,
\begin{align*}
    R_0=\frac{1}{\gamma+\mu}\rho(F_{11}  V_{22}^{-1}V_{21})
\end{align*}
From spectral properties of non-negative matrices, $R_0$ can be bounded above and below.
\begin{theorem}
Assume that $\beta_i=\beta$, $\gamma_i=\gamma$, $\gamma_i+\mu_i=\gamma+\mu$ and $\sigma_i+\mu_i+\delta_i=\sigma+\mu+\delta$ for all $i$ and $M^E=M^I=0$. Then 
\begin{align*}
    \min\limits_{i}\lbrace R_0^{(i)} \rbrace \leq R_0 \leq \max\limits_{i}\lbrace R_0^{(i)} \rbrace.
\end{align*}
\end{theorem}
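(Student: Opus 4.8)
The plan is to exploit the block form of the next-generation matrix in \eqref{basic reproduction number}: once migration is switched off in the infected classes, the transition blocks degenerate to scalar multiples of the identity, so $R_0$ becomes a fixed positive multiple of the spectral radius of a single nonnegative matrix, which can then be squeezed between the patchwise numbers $R_0^{(i)}$ by elementary Perron--Frobenius estimates. First I would substitute $M^E=M^I=0$ into $V_{11},V_{21},V_{22}$. Every off-diagonal entry of these blocks, together with the diagonal corrections $\sum_{j}m_{ji}$ appearing in them, is built only from $E$- and $I$-travel rates, so all of them vanish; combined with $\gamma_i=\gamma$, $\gamma_i+\mu_i=\gamma+\mu$ and $\sigma_i+\mu_i+\delta_i=\sigma+\mu+\delta$ this leaves $V_{11}=(\gamma+\mu)I_n$, $V_{21}=\gamma I_n$, $V_{22}=(\sigma+\mu+\delta)I_n$. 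Hence
\[
R_0=\rho\!\bigl(F_{11}V_{22}^{-1}V_{21}V_{11}^{-1}\bigr)=\frac{\gamma}{(\gamma+\mu)(\sigma+\mu+\delta)}\,\rho\bigl(F_{11}\bigr),
\]
and the whole statement is reduced to locating $\rho(F_{11})$, where $F_{11}$ is entrywise nonnegative with diagonal entries $\beta_i$.

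Second, I would apply the classical two-sided estimate for the Perron root of a nonnegative matrix $A$: for any strictly positive vector $x$, $\min_i (Ax)_i/x_i\le\rho(A)\le\max_i (Ax)_i/x_i$; in particular $\rho(A)$ lies between the least and greatest row sum of $A$, and likewise between the least and greatest column sum. Taking $A=F_{11}$ and multiplying through by the positive constant $\gamma/[(\gamma+\mu)(\sigma+\mu+\delta)]$ converts this into a two-sided bound on $R_0$. What remains is to rewrite the extremal row (or column) sums so produced in terms of $\beta_i,\gamma_i,\mu_i,\sigma_i,\delta_i$ and to recognize them as $\min_i R_0^{(i)}$ and $\max_i R_0^{(i)}$; here the uniformity hypotheses on the within-patch $E$- and $I$-rates do the bookkeeping, since they are exactly what forces $\tfrac{\gamma}{(\gamma+\mu)(\sigma+\mu+\delta)}F_{11}$ to have all diagonal entries equal to the common value $\beta\gamma/[(\gamma+\mu)(\sigma+\mu+\delta)]=R_0^{(i)}$.

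The step I expect to be the main obstacle — indeed the only substantive point — is this last identification: one must verify that the travel-acquired-infection entries $m_{ij}\alpha_j$ sitting off the diagonal of $F_{11}$ are controlled so that $\rho(F_{11})$ stays inside the interval pinned down by the $R_0^{(i)}$, and it is precisely the hypotheses $M^E=M^I=0$ together with the parameter uniformity that supply this control. A convenient way to close the argument is to compare with the zero-migration computation already recorded in the excerpt, in which the next-generation matrix is the diagonal matrix $\mathrm{diag}\bigl(R_0^{(i)}\bigr)$ with spectral radius $\max_i R_0^{(i)}$: monotonicity of the Perron root under the entrywise comparison $\mathrm{diag}(\beta_i)\le F_{11}$ gives the lower bound $R_0\ge\min_i R_0^{(i)}$, and the row-sum estimate above gives the matching upper bound $R_0\le\max_i R_0^{(i)}$, completing the proof.
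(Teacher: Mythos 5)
Your reduction of the transition blocks is consistent with the hypothesis $M^E=M^I=0$: with exposed and infectious individuals immobile, $V_{11}=(\gamma+\mu)I$, $V_{21}=\gamma I$, $V_{22}=(\sigma+\mu+\delta)I$, hence $R_0=\frac{\gamma}{(\gamma+\mu)(\sigma+\mu+\delta)}\,\rho(F_{11})$, and the lower bound via monotonicity of the Perron root is fine. The genuine gap is the upper bound, at exactly the point you flag and then assert away: the off-diagonal entries of $F_{11}$ are $m_{ij}\alpha_j$, and they arise from the susceptible travel terms $m_{ij}\frac{\alpha_j I_j}{N_j}S_j$; the hypotheses $M^E=M^I=0$ restrict only the movement of $E$ and $I$ and say nothing about these entries, and the parameter-uniformity assumptions do not remove them either. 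Consequently $\rho(F_{11})$ is in general strictly larger than $\max_i\beta_i$ (its $i$-th row sum is $\beta+\sum_{j\ne i}m_{ij}\alpha_j$), so your row-sum estimate does not yield $R_0\le\max_i R_0^{(i)}$; indeed, under your literal reading the inequality would fail whenever some $m_{ij}\alpha_j>0$, because $\beta_i=\beta$, $\gamma_i=\gamma$ make all the $R_0^{(i)}$ coincide while $R_0$ would exceed their common value. No argument from the stated hypotheses can supply the control you invoke; what is actually needed is that those entries vanish (e.g.\ $\alpha_j=0$, or no susceptible movement, as in the no-human-movement setting $M=0$ announced just before the theorem).

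This is also where your route differs from the paper's. The paper does not diagonalize $V_{22}$: it keeps $Y=V_{22}^{-1}$ general, uses $\mathbf{1}^TV_{22}=(\sigma+\mu+\delta)\mathbf{1}^T$ to obtain the column sums of $Y$, passes to the similar matrix $V_{21}F_{11}V_{22}^{-1}$, and bounds its column sums, written there as $\sum_j\beta_j\gamma_j y_{ji}$ --- a computation valid only because $F_{11}$ and $V_{21}$ are taken diagonal (no during-travel infection or progression entries), so the problematic $m_{ij}\alpha_j$ terms never enter; note too that the paper's proof keeps $\beta_i\gamma_i$ patch-dependent, which is what makes the two-sided bound nontrivial, whereas the literal hypotheses you copied reduce it to an equality. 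If you add the structural fact that $F_{11}$ and $V_{21}$ are diagonal (or set $\alpha_j=0$), your argument closes and becomes a cleaner special case of the paper's column-sum argument; without it, the upper bound is unproved and, as stated, false.
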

\begin{proof}
The proof is motivated by similar outcomes in \cite{hsieh2007impact, salmani2006model} and the fact that $F_{11}  V_{22}^{-1}V_{21}$ is similar to $V_{21}F_{11}  V_{22}^{-1}$. Let $V_{22}^{-1}=Y=[y_{ij}]$, $\textbf{1}=(1,1,\dots,1)^T\in\mathbb{R}^n$ and $[1^TV_{21}F_{11}  V_{22}^{-1}]_i$ denote the sum of each entries in the $i^{th}$ column of $V_{21}F_{11}  V_{22}^{-1}$. Since the column sum of $V_{22}$ is $(\sigma+\mu+\delta)$, $\textbf{1}^T V_{22}=(\sigma+\mu+\delta)\textbf{1}^T$, giving $\textbf{1}^TY=\frac{1}{\sigma+\mu+\delta}\textbf{1}^T$. Therefore, the column sum of $Y$ is $\frac{1}{\sigma+\mu+\delta}$. Then
\begin{align*}
[1^TV_{21}F_{11}  V_{22}^{-1}]&=\beta_1 \gamma_1y_{1i}+\beta_2 \gamma_2y_{2i}+\dots+\beta_n \gamma_ny_{ni}\\
& \leq \max \limits_{i}\lbrace{\beta_i \gamma_i}\rbrace (y_{1i}+y_{2i}+\dots+y_{ni})\\
&=\max\limits_{i}\Big\lbrace \frac{\beta_i \gamma_i}{\sigma+\mu+\delta}\Big\rbrace.
\end{align*}
Similarly, $[1^TV_{21}F_{11}  V_{22}^{-1}]\geq \min\limits_{i}\Big\lbrace \frac{\beta_i \gamma_i}{\sigma+\mu+\delta}\Big\rbrace.$\\
The conclusion follows from the facts that $\rho(F_{11}  V_{22}^{-1}V_{21})$ lies between its minimum and maximum column sums and that $R_0=\rho(V_{22}^{-1}V_{21}F_{11})=\rho(F_{11}  V_{22}^{-1}V_{21})$.
\end{proof}
\subsubsection{Global stability of disease-free equilibrium}
\begin{theorem}
Assume that the movement matrix $M$ is irreducible. Then the following results exhibit for the model \eqref{EQ:eqn 2.1}\\
(1) If $R_0<1$, then the disease-free equilibrium (DFE) $P^0$  of \eqref{EQ:eqn 2.1} is globally asymptotically stable in the region $\Omega$.\\
(2) If $R_0>1$, then the disease-free equilibrium (DFE) $P^0$ of \eqref{EQ:eqn 2.1} is unstable and the model \eqref{EQ:eqn 2.1} is uniformly persistent.
\end{theorem}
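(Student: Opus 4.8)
The plan is to handle the two parts with different tools: for $R_0<1$, a differential-inequality comparison that reduces the infected subsystem to a linear cooperative system, followed by a limit-set argument for the remaining variables; for $R_0>1$, the instability half of the van den Driessche--Watmough criterion together with a standard acyclicity-based uniform persistence theorem. Throughout, the compact attracting set $\Omega$ from the Corollary is the arena in which all estimates are carried out.

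For part (1), I would first separate out the $2n$ equations for $x:=(E_1,\dots,E_n,I_1,\dots,I_n)^T$. Positivity of solutions together with $N_i=S_i+E_i+I_i+R_i$ gives $0\le S_i/N_i\le 1$, so each new-infection term obeys $\beta_i S_iI_i/N_i\le \beta_i I_i$ and $m_{ij}\alpha_j S_jI_j/N_j\le m_{ij}\alpha_j I_j$; consequently $\dot x\le (F-V)x$ holds componentwise, with $F$ and $V$ the block matrices defined above. The matrix $F-V$, whose diagonal blocks are $-V_{11}$ and $-V_{22}$ and whose off-diagonal blocks are $F_{11}$ and $V_{21}$, is Metzler, since $F,V_{21}\ge 0$ and $V_{11},V_{22}$ have the $Z$-sign pattern; hence the classical comparison theorem for cooperative systems gives $0\le x(t)\le y(t)$, where $y$ solves $\dot y=(F-V)y$ with $y(0)=x(0)$. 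Because $V$ is a non-singular $M$-matrix and $R_0=\rho(FV^{-1})<1$, the stability modulus satisfies $s(F-V)<0$ (the standard equivalence used in \cite{van2002reproduction}); therefore $y(t)\to 0$ and hence $E_i(t),I_i(t)\to 0$ for every $i$. It then remains to identify the limit of $(S,R)$: every trajectory in $\Omega$ has a nonempty invariant $\omega$-limit set, which, since $E_i,I_i\to 0$, lies in $\{E=I=0\}$; on that set the flow reduces to the linear system \eqref{EQ:eqn 3.11}, i.e. $\dot S=\Pi-\psi^S S$, $\dot R=-\psi^R R$, whose unique compact invariant set is $\{(S^0,0)\}$ because $\psi^S,\psi^R$ are positively stable $M$-matrices. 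Thus the $\omega$-limit set of every trajectory equals $\{P^0\}$, which, combined with the local asymptotic stability of $P^0$ for $R_0<1$ already quoted from \cite{van2002reproduction}, establishes global asymptotic stability in $\Omega$.

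For part (2), the instability of $P^0$ when $R_0>1$ is exactly the second assertion of Theorem 2 of \cite{van2002reproduction}, since $R_0>1$ is equivalent to $s(F-V)>0$. For uniform persistence I would apply the acyclicity approach used in \cite{senapati2019cholera}: the boundary set $\partial\Omega_0:=\{(S,E,I,R)\in\Omega:\sum_i(E_i+I_i)=0\}$ is positively invariant, and on it the flow is again \eqref{EQ:eqn 3.11}, which converges to $P^0$, so $\{P^0\}$ is the unique compact invariant set in $\partial\Omega_0$ and is trivially isolated and acyclic there. The decisive step is to show that the stable set of $P^0$ does not meet the interior $\Omega\setminus\partial\Omega_0$, i.e. that $P^0$ is a uniform weak repeller: near $P^0$ the infected subsystem reads $\dot x=(F-V)x+o(\|x\|)$, and since $s(F-V)>0$ with an associated nonnegative left eigenvector (available once $F-V$ is checked to be irreducible, using irreducibility of $M$ and positivity of the $\beta_i,\gamma_i$), no interior trajectory can approach $P^0$. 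The standard uniform persistence theorem then yields $\varepsilon>0$ with $\liminf_{t\to\infty}E_i(t)\ge\varepsilon$ and $\liminf_{t\to\infty}I_i(t)\ge\varepsilon$ for all $i$, irreducibility of $M$ being used once more to propagate persistence from one patch to all.

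I expect the main obstacle to be the rigorous bookkeeping in part (2): verifying that $F-V$ is irreducible so that the weak-repeller estimate is uniform over directions, and checking the isolatedness and acyclicity hypotheses of the persistence theorem for the boundary flow. The $R_0<1$ direction is, by contrast, essentially routine once the comparison inequality $\dot x\le(F-V)x$ is in hand.
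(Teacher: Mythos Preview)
Your proposal is correct, but it follows a genuinely different route from the paper's proof. The paper constructs a single linear Lyapunov function $L=b^{T}V^{-1}x$, where $b$ is the left Perron eigenvector of $V^{-1}F$ associated with $R_0$, and computes $L'\le (R_0-1)\,b^{T}x$; LaSalle's invariance principle then gives global stability when $R_0<1$, and the same computation yields $L'>0$ near $P^0$ in the interior when $R_0>1$, from which instability and persistence are read off via \cite{li1999global}. By contrast, you avoid the eigenvector construction entirely in part~(1) and instead use the monotone comparison $\dot x\le (F-V)x$ together with $s(F-V)<0$, followed by a limiting-system argument for $(S,R)$; for part~(2) you invoke the van den Driessche--Watmough instability criterion and a separate acyclicity-based persistence theorem. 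The Lyapunov route is more economical---one object handles both regimes and the LaSalle step subsumes the limiting-system analysis---while your comparison argument is arguably more elementary for the $R_0<1$ direction since it needs no spectral decomposition, and your part~(2) spells out more explicitly the structural hypotheses (irreducibility of $F-V$, isolatedness and acyclicity on the boundary) that the paper's brief citation to \cite{li1999global} leaves implicit. Either approach is standard and each closes to a complete proof.
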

\begin{proof}
To prove the above theorem, we follow the following procedure as in the proof of Theorem 5.1 in \cite{eisenberg2013cholera}.

Let us define $x = (E_1,E_2,....,E_n,I_1,I_2,.....,I_n)$. Following Exercise 1.2, in \cite{johnson1985matrix}, the matrices $FV^{-1}$ and $V^{-1}F$ have the same spectral radius. Thus $R_0 = \rho(V^{-1}F)=\rho(FV^{-1})$. Assume that $b$ is the left eigenvector of $V^{-1}F$, which corresponds to the eigenvalue $R_0$. This implies $b^TV^{-1}F = R_0 b^T$.

Let us consider the following Lyapunov function
\begin{align}\label{EQ:eqn 3.14}
    L=(b^TV^{-1}x)
\end{align}
Differentiating $L$ along \eqref{EQ:eqn 2.1} gives,

\begin{align}\label{EQ:eqn 3.15}
    L'&=b^TV^{-1}x'\nonumber\\
       &\leq b^T V^{-1}(F-V)x\nonumber\\
       &=b^T V^{-1}F x - b^T x\nonumber\\
       &=(R_0-1)b^T x \leq 0,  \text{ if } R_0<1.
\end{align}

Using the irreducibility of the matrix $M$, it can be checked that the singleton $\lbrace P^0 \rbrace$ is the unique invariant set where $L'=0$. Therefore by using LaSalle’s invariance principle \cite{la1976stability}, $P^0$ is globally asymptotically stable in $\Omega$.

Alternatively, if $R_0>1$ and $x>0$, it follows that $(R_0-1)b^Tx >0$. This combination of inequality and continuity imply that $L'> 0$ for a small neighborhood of $P^0$ in int($\Omega$). That is, for $R_0>1$, any solution which is sufficiently close to $P^0$ will step away from $P^0$. Based on the results of \cite{li1999global}, and using the irreducibility of the matrix $M$, the instability of $P^0$ indicates the uniform persistence of the model \eqref{EQ:eqn 2.1}.
\end{proof}

\section{Numerical simulations}\label{section4}
We consider the infection is originated in patch 1 i.e, few people are infected in patch 1 while other patches have no infection in the beginning. Initial conditions are $(S_1(0), E_1(0), I_1(0), R_1(0))=(100000, 100, 10, 0)$ and $(S_i(0), E_i(0), I_i(0), R_i(0))=(100000, 0, 0, 0)$ for i=2,3,4,5. Some fixed parameters are reported in Table \ref{Tab:table_parameters} and some other fixed parameters are $d_{12}=d_{21}=100$, $d_{13}=d_{31}=110$, $d_{14}=d_{41}=120$, $d_{15}=d_{51}=130$, $d_{23}=d_{32}=140$, $d_{24}=d_{42}=150$, $d_{25}=d_{52}=160$, $d_{34}=d_{43}=170$, $d_{35}=d_{53}=180$, $d_{45}=d_{54}=190$ and $(\nu_1, \nu_2, \nu_3, \nu_4, \nu_5)=(7, 6, 5, 4, 3)$. These parameters are kept constant throughout the numerical simulation section.

\begin{table*}
	\caption{Parameter values and their units used in the numerical simulation}
	\label{Tab:table_parameters}
	\centering
	\begin{tabular}{|c|c|c|}
		\hline
		Parameter & value/Range & unit \\
		\hline
		$\mu_i$ & 0.00032 & week$^{-1}$ \\
		$\Pi_i$ & $N_i \times \mu_i$ & Person week$^{-1}$ \\
		$\beta_i$ & (0,1) & week$^{-1}$ \\
		$\alpha_i$ & (0,1) & week$^{-1}$ \\
		$\gamma_i$ & 0.15 & week$^{-1}$\\
		$\xi_i$ & 0.3 & week$^{-1}$ \\
		$\sigma_i$ & 0.09 & week$^{-1}$ \\
		$\delta_i$ & 0.05 & week$^{-1}$ \\
		$p_i$ & 0.001 & week$^{-1}$ \\
		$\epsilon$ & (0,1) & Kilometers Person$^{-1}$ week$^{-1}$ \\
		$\eta$ & 0.25 & Unitless \\
		$\nu$ & $[1,\infty)$ & Person$^{-1}$ \\
		$d_{ij}$ & (0,5000) & Kilometers \\
		\hline
	\end{tabular}
\end{table*}

\subsection{Effect of network topology}
In this section, we simulate the model \eqref{EQ:eqn 2.1} to analyze different mobility patterns. Three different structures of the underlying network is considered, namely, fully connected network, ring of patches and a star like structure. Without loss of generality, we choose the number of patches to be five (see Fig. \ref{fig:patch_structures}). The fixed parameters are taken from Table \ref{Tab:table_parameters} and other parameters are 
$(\beta_1, \beta_2, \beta_3, \beta_4, \beta_5)=(0.3, 0.24, 0.18, 0.12, 0.06)$, $(\alpha_1, \alpha_2, \alpha_3, \alpha_4, \alpha_5)=(0.4, 0.35, 0.3, 0.25, 0.2)$ and $\epsilon = 0.005$.

\begin{figure}
    \centering
    \includegraphics[width=1.0\textwidth]{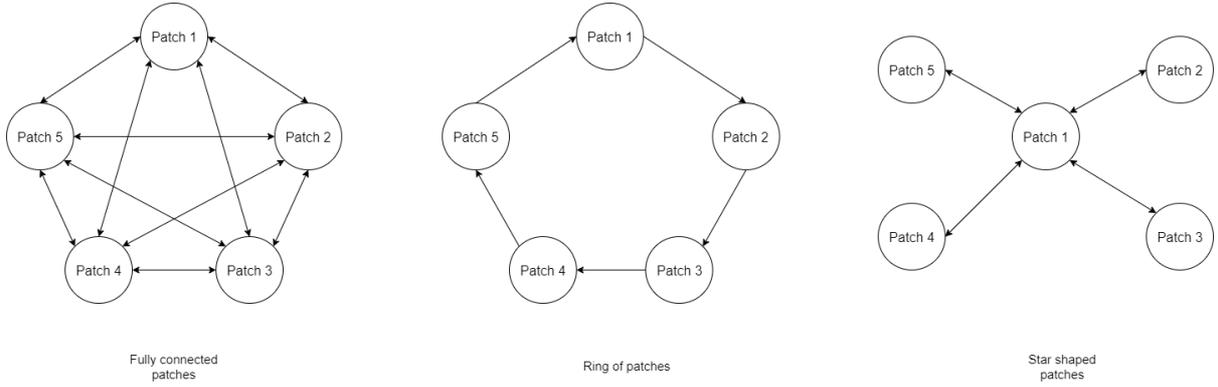}
    \caption{Different patch structures examined in numerical simulation.}
    \label{fig:patch_structures}
\end{figure}

\subsubsection{Scenario I: Fully connected patches}
The dynamics of infected population in fully connected metapopulation structure is depicted in Fig. \ref{fig:patch_structures_ts}(A). The proposed model \eqref{EQ:eqn 2.1} is simulated with the fixed parameters. It can be observed that infected populations in patch 1 and patch 2 show similar trend of outbreak. However, patch 2 experiences a delayed outbreak as that of patch 1. As the epidemic is originated in patch 1, the subsequent patches have delayed outbreaks. It is also noted that the peak of the outbreaks are decreasing as the transmission rates vary in the same order.

\subsubsection{Scenario II: Ring of patches}
In this scenario, the underlying network is assumed to be a ring where migration occur in a cyclic pattern. We consider a confined case of system \eqref{EQ:eqn 2.1} by assuming that the patches are organized in a ring. Individuals in a given patch $i$ will only move to patch $i+1$ and from patch $n$ to patch $1$, which is referred to as one-way migration. In this case all movement rates are zero except
\begin{align*}
    m_{1n}>0, m_{21}>0, m_{32}>0,......., m_{nn-1}>0.
\end{align*}
Then the system \eqref{EQ:eqn 2.1} reduces to 
\begin{eqnarray}\label{EQ:eqn 4.1}
\displaystyle{\frac{dS_i}{dt}} &=& \Pi_i- \beta_i \frac{I_i}{N_i} S_i - \mu_i S_i - m_{i+1 i} S_i + m_{ii-1} \Big(1-\frac{\alpha_{i-1} I_{i-1}}{N_{i-1}}\Big) S_{i-1},\nonumber \\
\displaystyle{\frac{dE_i}{dt}} &=& \beta_i \frac{I_i}{N_i} S_i -(\gamma_i+\mu_i)E_i - m_{i+1 i} E_i +  m_{ii-1} \frac{\alpha_{i-1} I_{i-1}}{N_{i-1}} S_{i-1} + (1-\xi_i)m_{ii-1} E_{i-1}, \\
\displaystyle{\frac{dI_i}{dt}} &=& \gamma_i E_i - (\sigma_i+ \mu_i +\delta_i)I_i - m_{i+1i} I_i + \xi_i m_{ii-1} E_{i-1} + (1-p_i) m_{ii-1} I_{i-1}, \nonumber \\
\displaystyle{\frac{dR_i}{dt}} &=& \sigma_i I_i - \mu_i  R_i - m_{i+1i} R_i + m_{ii-1} R_{i-1} + p_i m_{ii-1} I_{i-1}, \nonumber
\end{eqnarray}
where 
\begin{align*}
    & \text{if } i=1, \text{then } i-1=n\\
    & \text{if } i=n, \text{then } i+1=1
\end{align*}

The time evolution of infected populations in this case are portrayed in Fig. \ref{fig:patch_structures_ts}(B). We observe that patch 1 undergoes recurrent epidemic i.e, multiple peaks are observed. Additionally, the magnitude of the peaks are higher than that of the fully connected case. The peaks for infected compartments of different patches are in decreasing order of magnitude as the transmission rates vary in the same order.

\begin{figure}
    \centering
    \includegraphics[width=0.32\textwidth]{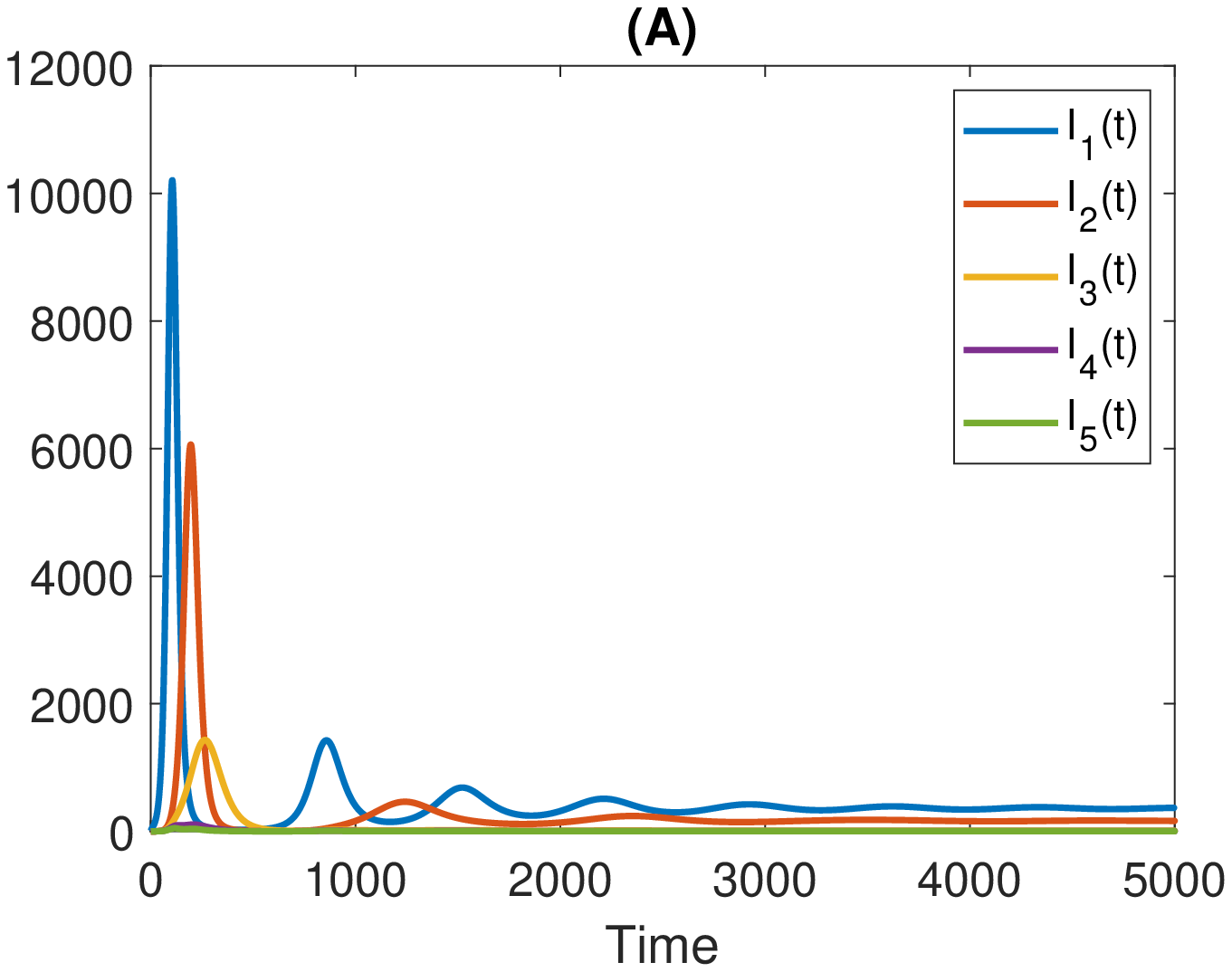}
    \includegraphics[width=0.32\textwidth]{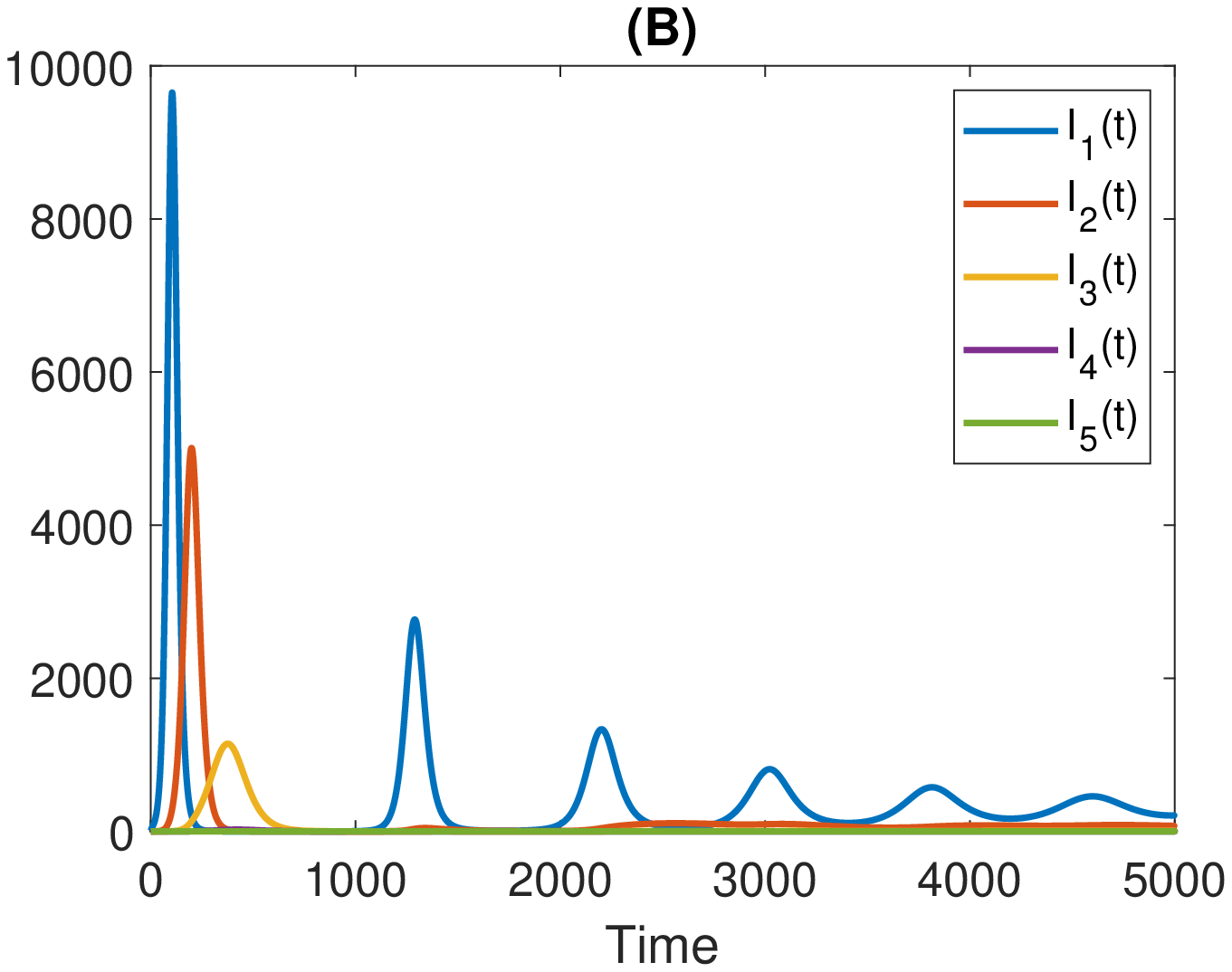}
    \includegraphics[width=0.32\textwidth]{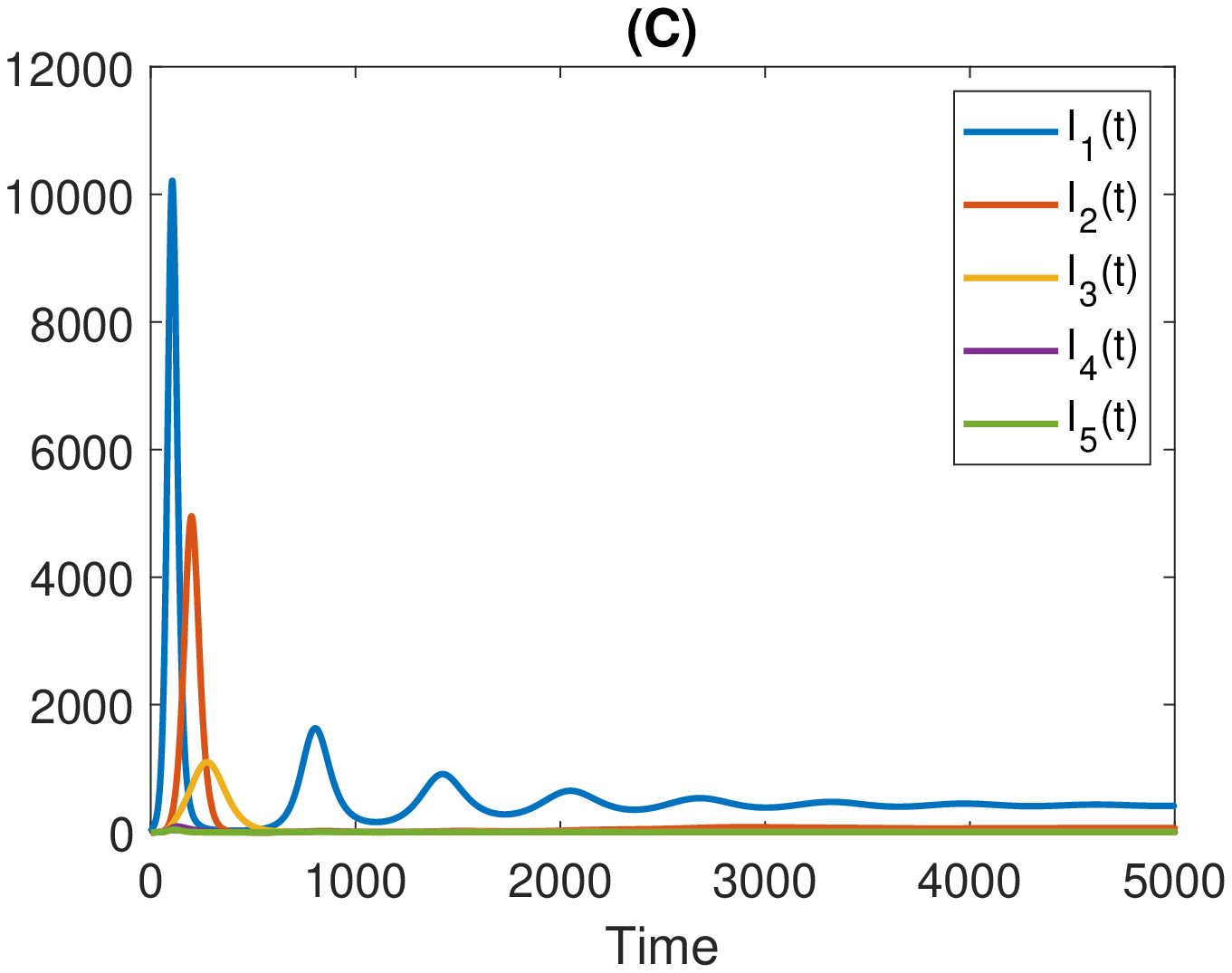}
    \caption{Time evolution of infected compartments for different patch structures (A) fully connected, (B) Ring of patches and (C) star network respectively.}
    \label{fig:patch_structures_ts}
\end{figure}

\subsubsection{Scenario III: Star shaped patches}
A star graph is a graph with one hub node in the center and remaining $ n - 1 $ nodes connected to the center hub node. The number of connections or edges in a star graph of $n$ vertices is $n-1$, where any node except the hub node has only one connection with the remaining nodes and the hub node has $n-1$ connections. Without loss of generality, we choose node $1$ as a hub node. In this case all movement rates are zero except
\begin{align*}
    & m_{1n}>0, m_{12}>0, m_{13}>0,......., m_{1 n-1}>0\\
    & m_{n1}>0, m_{21}>0, m_{31}>0,......., m_{n-1 1}>0
\end{align*}
Then the system \eqref{EQ:eqn 2.1} reduces to 
\begin{eqnarray}\label{EQ:eqn 5.1}
 \displaystyle{\frac{dS_1}{dt}}  &=& \Pi_1- \beta_1 \frac{I_1}{N_1} S_1 - \mu_1 S_1 - \sum_{j=1}^{n} m_{j1} S_1 + \sum_{j=1}^{n} m_{1j} (1-\frac{\alpha_j I_j}{N_j}) S_j,\nonumber \\
\displaystyle{\frac{dE_1}{dt}} &=& \beta_1 \frac{I_1}{N_1} S_1 -(\gamma_1+\mu_1)E_1 - \sum_{j=1}^{n} m_{j1} E_1 + \sum_{j=1}^{n} m_{1j} \frac{\alpha_j I_j}{N_j} S_j + \sum_{j=1}^{n} (1-\xi_1)m_{1j} E_j, \nonumber\\
\displaystyle{\frac{dI_1}{dt}} &=& \gamma_1 E_1 - (\sigma_1+ \mu_1 +\delta_1)I_1 - \sum_{j=1}^{n} m_{j1} I_1 + \sum_{j=1}^{n} \xi_1 m_{1j} E_j + \sum_{j=1}^{n} (1-p_1) m_{1j} I_j, \nonumber \\
\displaystyle{\frac{dR_1}{dt}} &=& \sigma_1 I_1 - \mu_1  R_1 - \sum_{j=1}^{n} m_{j1} R_1 + \sum_{j=1}^{n} m_{1j} R_j + \sum_{j=1}^{n} p_1 m_{1j} I_j, \\
\displaystyle{\frac{dS_i}{dt}} &=& \Pi_i- \beta_i \frac{I_i}{N_i} S_i - \mu_i S_i - m_{1 i} S_i + m_{i1} \Big(1-\frac{\alpha_{1} I_{1}}{N_{1}}\Big) S_{1},\nonumber \\
\displaystyle{\frac{dE_i}{dt}} &=& \beta_i \frac{I_i}{N_i} S_i -(\gamma_i+\mu_i)E_i - m_{1 i} E_i +  m_{i1} \frac{\alpha_{1} I_{1}}{N_{1}} S_{1} + (1-\xi_i)m_{i1} E_{1}, \nonumber\\
\displaystyle{\frac{dI_i}{dt}} &=& \gamma_i E_i - (\sigma_i+ \mu_i +\delta_i)I_i - m_{1i} I_i + \xi_i m_{i1} E_{1} + (1-p_i) m_{i1} I_{1}, \nonumber \\
\displaystyle{\frac{dR_i}{dt}} &=& \sigma_i I_i - \mu_i  R_i - m_{1i} R_i + m_{i1} R_{1} + p_i m_{i1} I_{1},
\nonumber
\end{eqnarray}
where the first $4$ equation for $i=1$ and last $(4n-4)$ equations for $2 \leq i \leq n$.

The dynamics of infected population in star shaped patches is depicted in Fig. \ref{fig:patch_structures_ts}(C). In this case, the time series behave similarly as that of the fully connected case. However, the prevalence of infection in the patches 1 through 4 are different from the fully connected case.

Based on these observations, we can conclude that the underlying network structure is an important factor in metapopulation modelling. Different network topology may induce significant changes in the behavior of the epidemic disease. MATLAB codes to generate these time series are provided in GitHub \footnote{\url{https://github.com/indrajitg-r/Metapop_IDT}}.

\subsection{Effect of infection during travel}
The proposed model \eqref{EQ:eqn 2.1} reduces to a model without infection during travel by putting $\alpha_i=0$, $\xi_i=0$ and $p_i=0$. Therefore, the system without infection during travel becomes

\begin{eqnarray}\label{EQ:eqn 6.1}
\displaystyle{\frac{dS_i}{dt}} &=& \Pi_i- \beta_i \frac{I_i}{N_i} S_i - \mu_i S_i - \sum_{j=1}^{n} m_{ji} S_i + \sum_{j=1}^{n} m_{ij} S_j,\nonumber \\
\displaystyle{\frac{dE_i}{dt}} &=& \beta_i \frac{I_i}{N_i} S_i -(\gamma_i+\mu_i)E_i - \sum_{j=1}^{n} m_{ji} E_i + \sum_{j=1}^{n}m_{ij} E_j, \\
\displaystyle{\frac{dI_i}{dt}} &=& \gamma_i E_i - (\sigma_i+ \mu_i +\delta_i)I_i - \sum_{j=1}^{n} m_{ji} I_i + \sum_{j=1}^{n} m_{ij} I_j, \nonumber \\
\displaystyle{\frac{dR_i}{dt}} &=& \sigma_i I_i - \mu_i  R_i - \sum_{j=1}^{n} m_{ji} R_i + \sum_{j=1}^{n} m_{ij} R_j, \nonumber
\end{eqnarray}

The parameter values are same as previous section except
$(\beta_1, \beta_2, \beta_3, \beta_4, \beta_5) =\\ 
(0.14, 0.12, 0.1, 0.08, 0.06)$, $(\alpha_1, \alpha_2, \alpha_3, \alpha_4, \alpha_5)=(0.9, 0.8, 0.7, 0.6, 0.5)$, and $\epsilon = 0.3$. Using this parameter set we simulate the models \eqref{EQ:eqn 2.1} to get the infected human population with infection during travel. On the other hand, \eqref{EQ:eqn 6.1} is simulated using this parameter set along with $\alpha_i=0$, $\xi_i=0$ and $p_i=0$ to generate time series without infection during travel. The five infected compartments corresponding to five patches are reported in Fig. \ref{fig:with_without_IDT}. From this figure, it can be observed that the endemic equilibrium is stable when there is infection during travel whereas the DFE is stable when there is no infection during travel. Therefore, it can be inferred that the infection during travel has the potential to alter stability of the system from disease-free to an endemic equilibrium. This reinforces that infection during travel may play a crucial role in disease transmission and persistence. 

\begin{figure}
    \centering
    \includegraphics[width=0.32\textwidth]{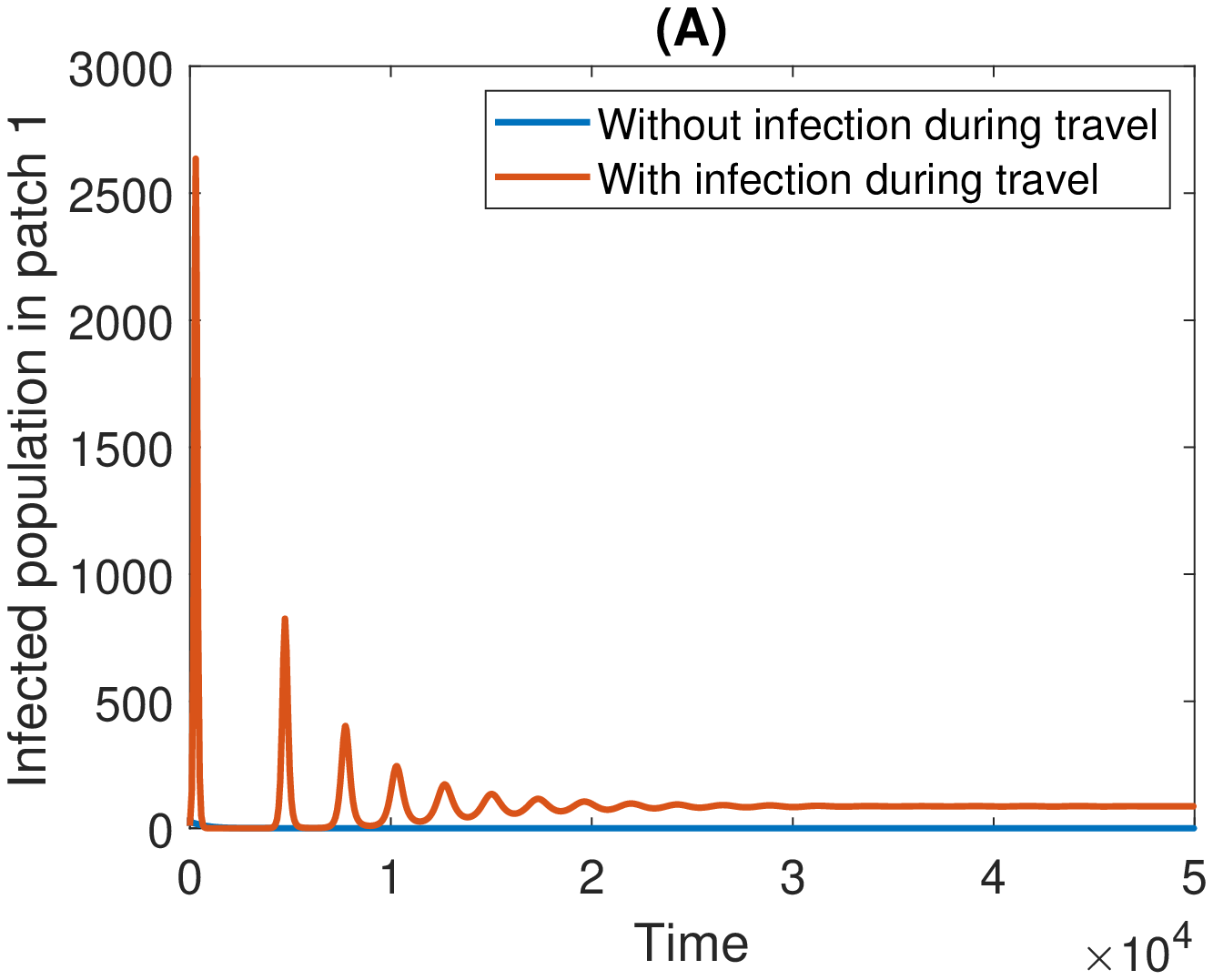}
    \includegraphics[width=0.32\textwidth]{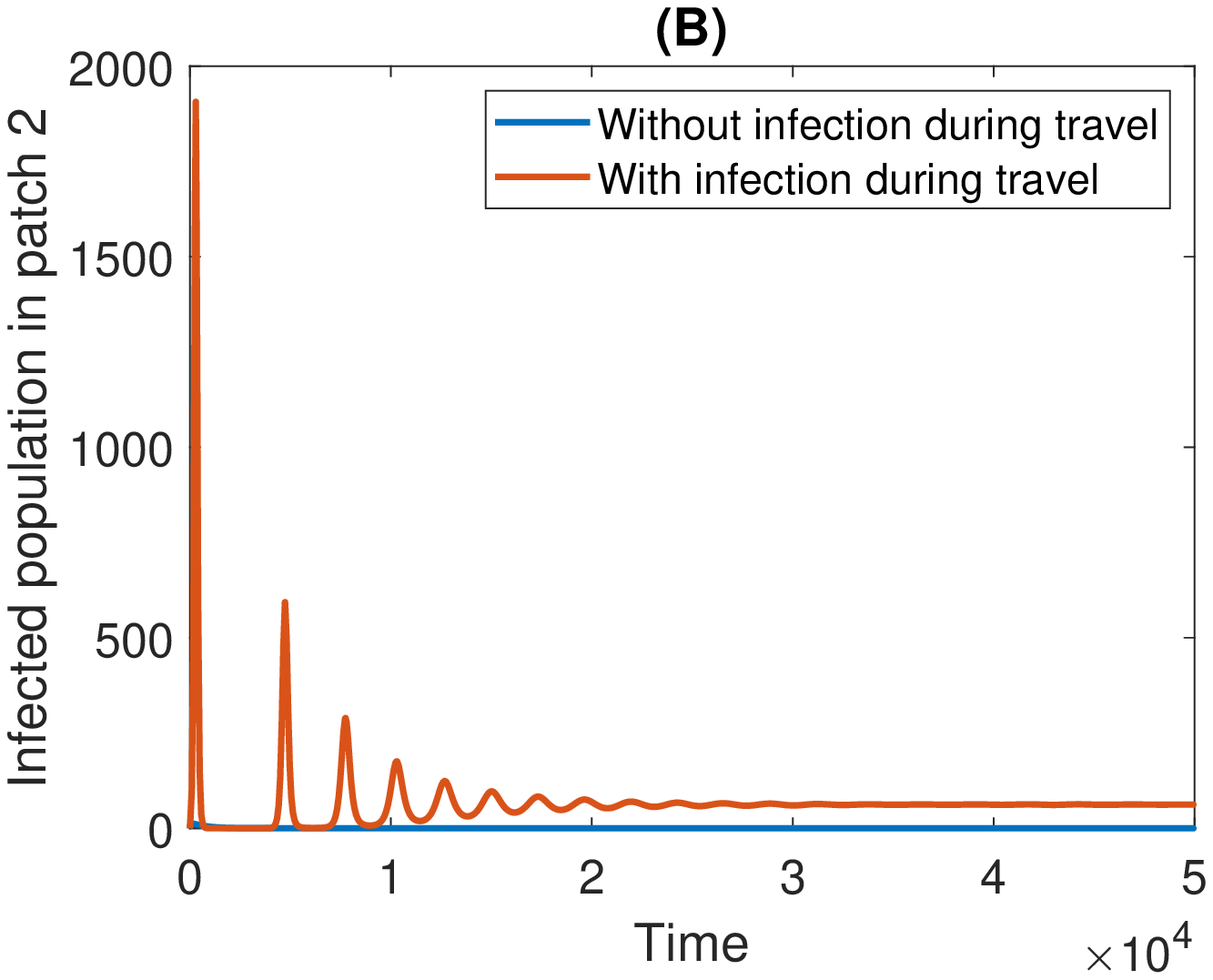}
    \includegraphics[width=0.32\textwidth]{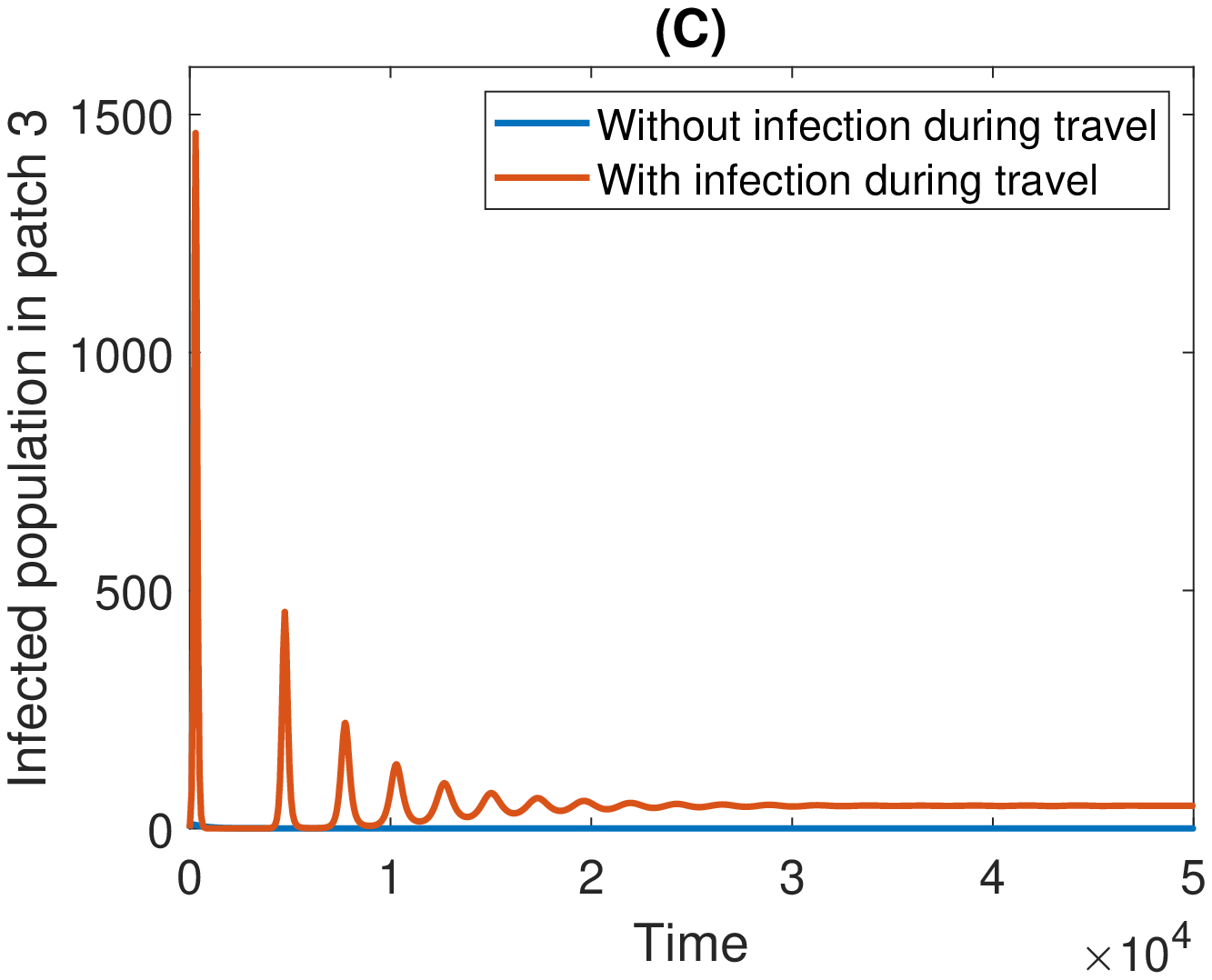}
    \includegraphics[width=0.32\textwidth]{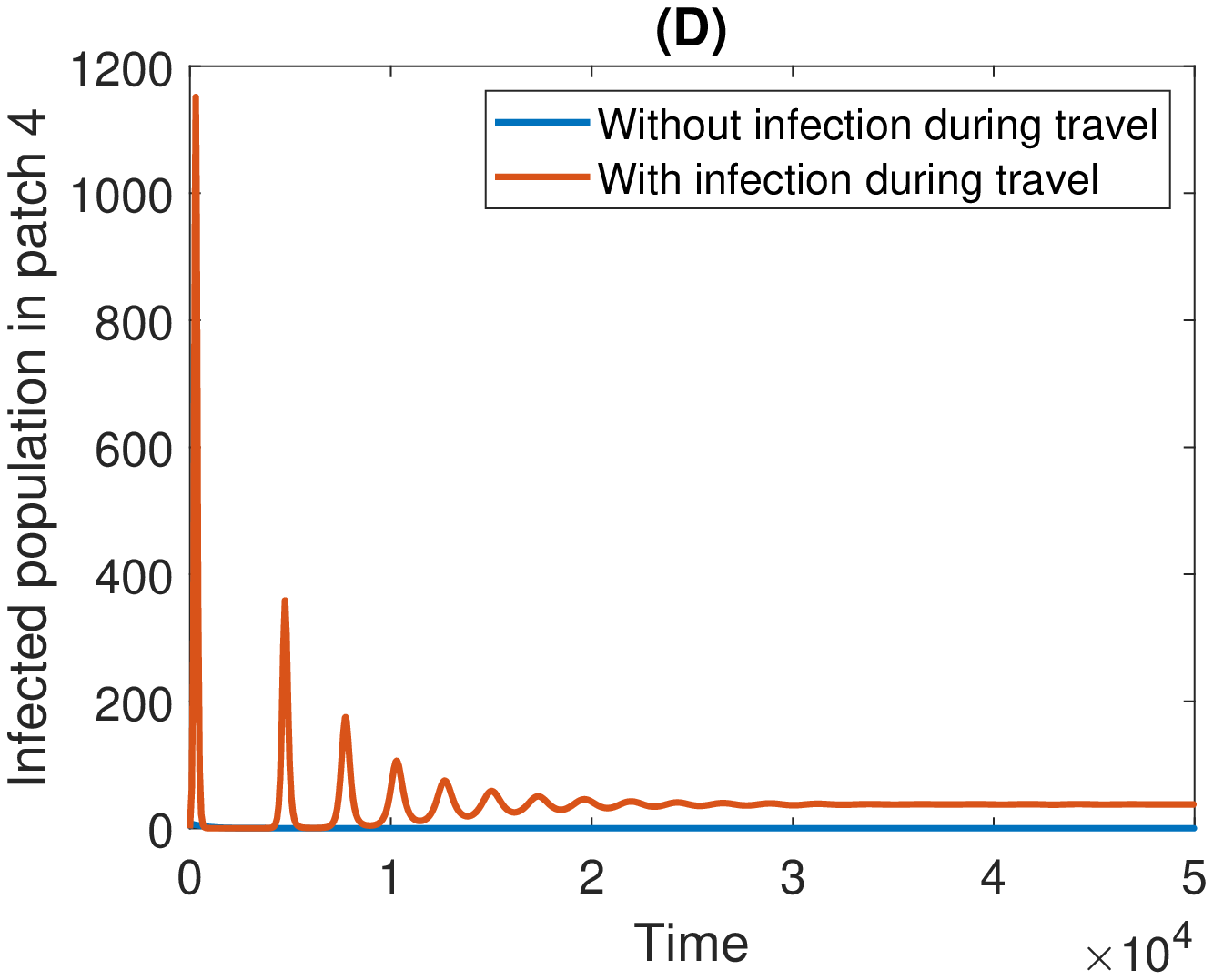}
    \includegraphics[width=0.32\textwidth]{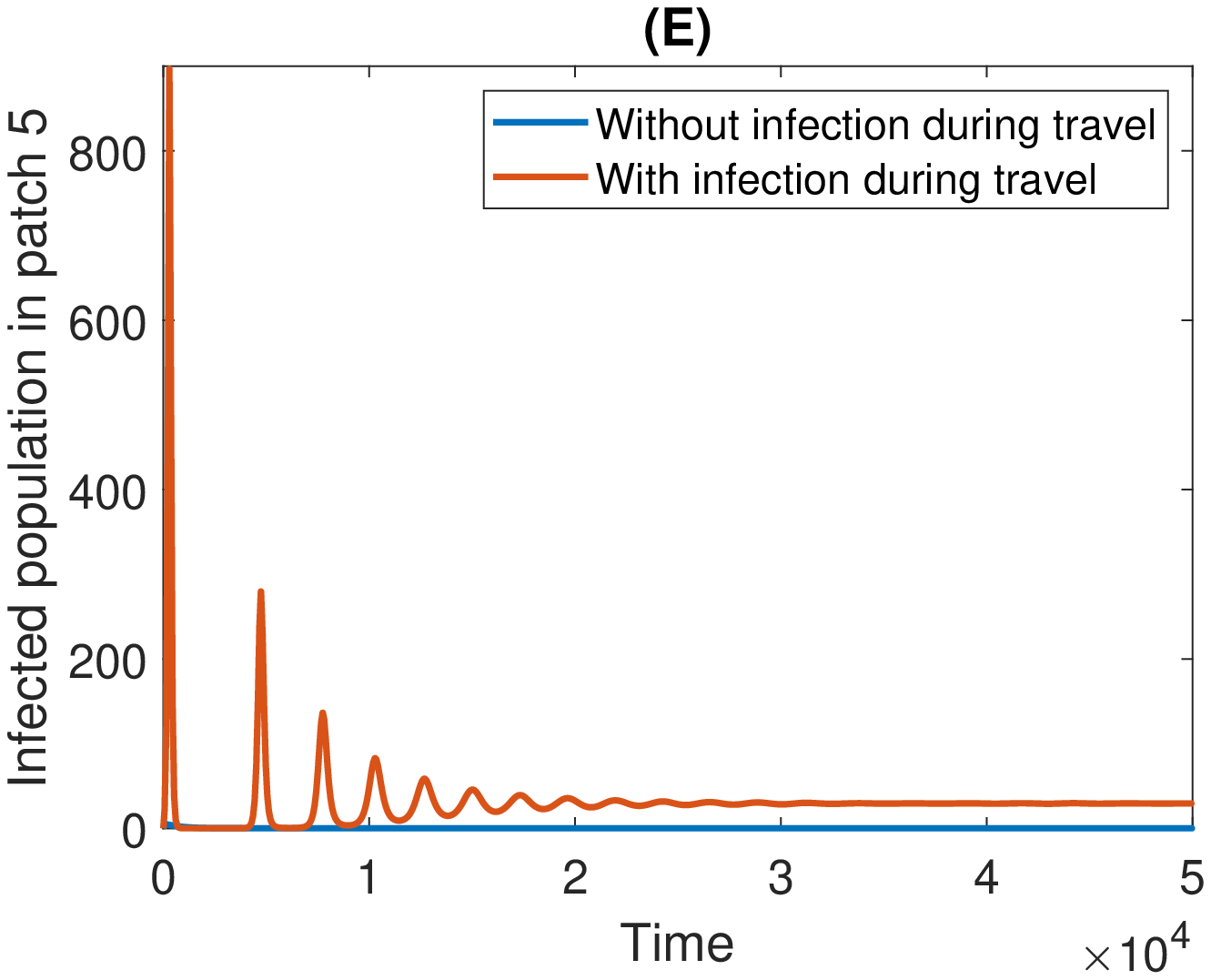}
    \caption{Behaviour of with and without infection during travel in model \eqref{EQ:eqn 2.1}. The basic reproduction number corresponding to infection during travel is $R_0 = 1.1053$ while for no infection during travel $R_0 = 0.7817$.}
    \label{fig:with_without_IDT}
\end{figure}

To investigate the effects of transmission coefficients during travel ($\alpha_i$), we simulate the system \eqref{EQ:eqn 2.1} with variable $\alpha_i$'s ($0.1 \leq \alpha_i \leq 0.9$). The total number of infected people after 100 weeks is calculated for each cases. While varying $\alpha_1$ we keep other $\alpha_i$'s to be zero. This is done to distinguish individual effects of $\alpha_i$'s. The boxplots of infected population for $\alpha_i$'s are depicted in Fig. \ref{fig:alpha_box_plots}. All the coefficients show similar trends of total infected cases. However, the scales of total infection are slightly different for different $\alpha_i$'s. This also indicate that the transmission during transport has a significant role in disease transmission.

\begin{figure}
    \centering
    \includegraphics[width=0.32\textwidth]{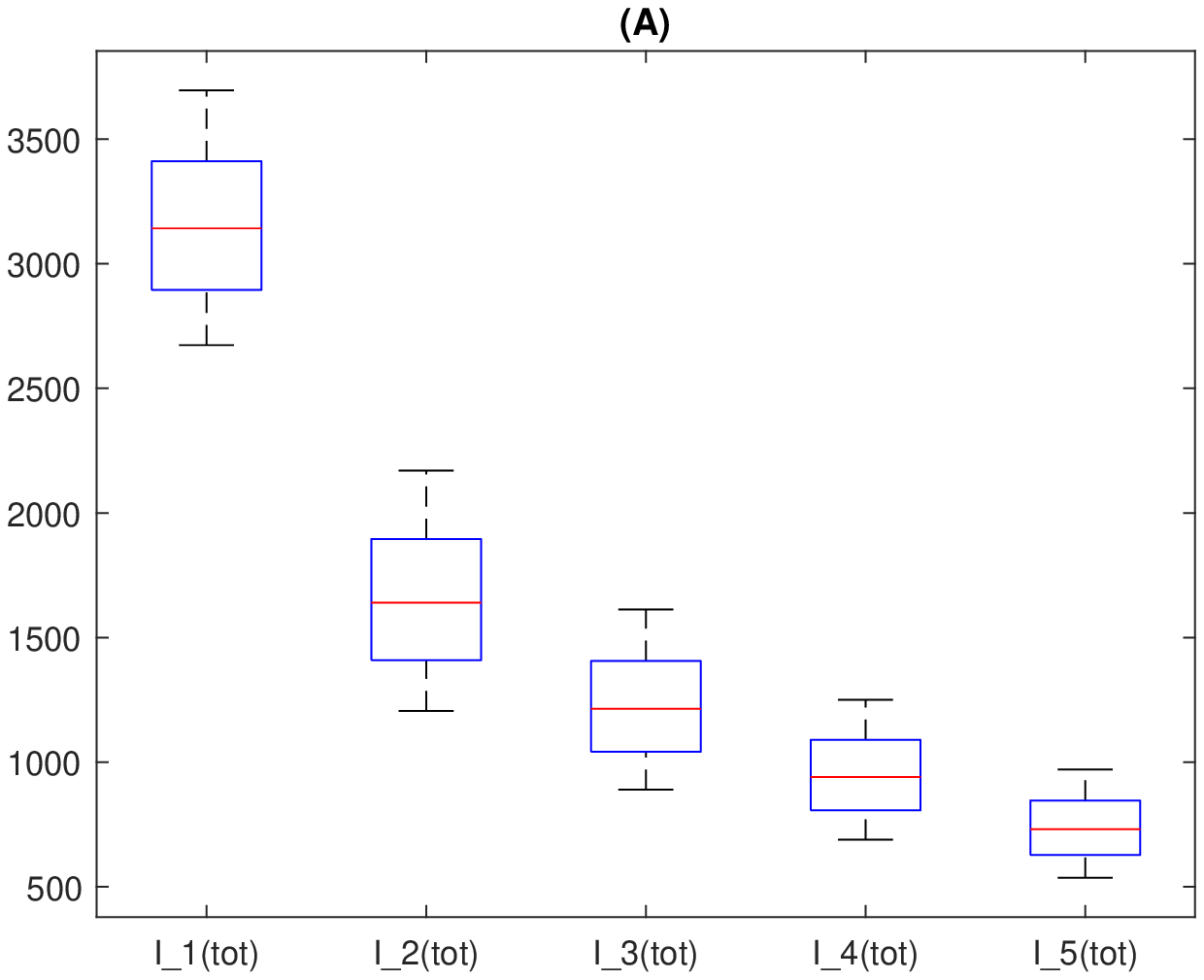}
    \includegraphics[width=0.32\textwidth]{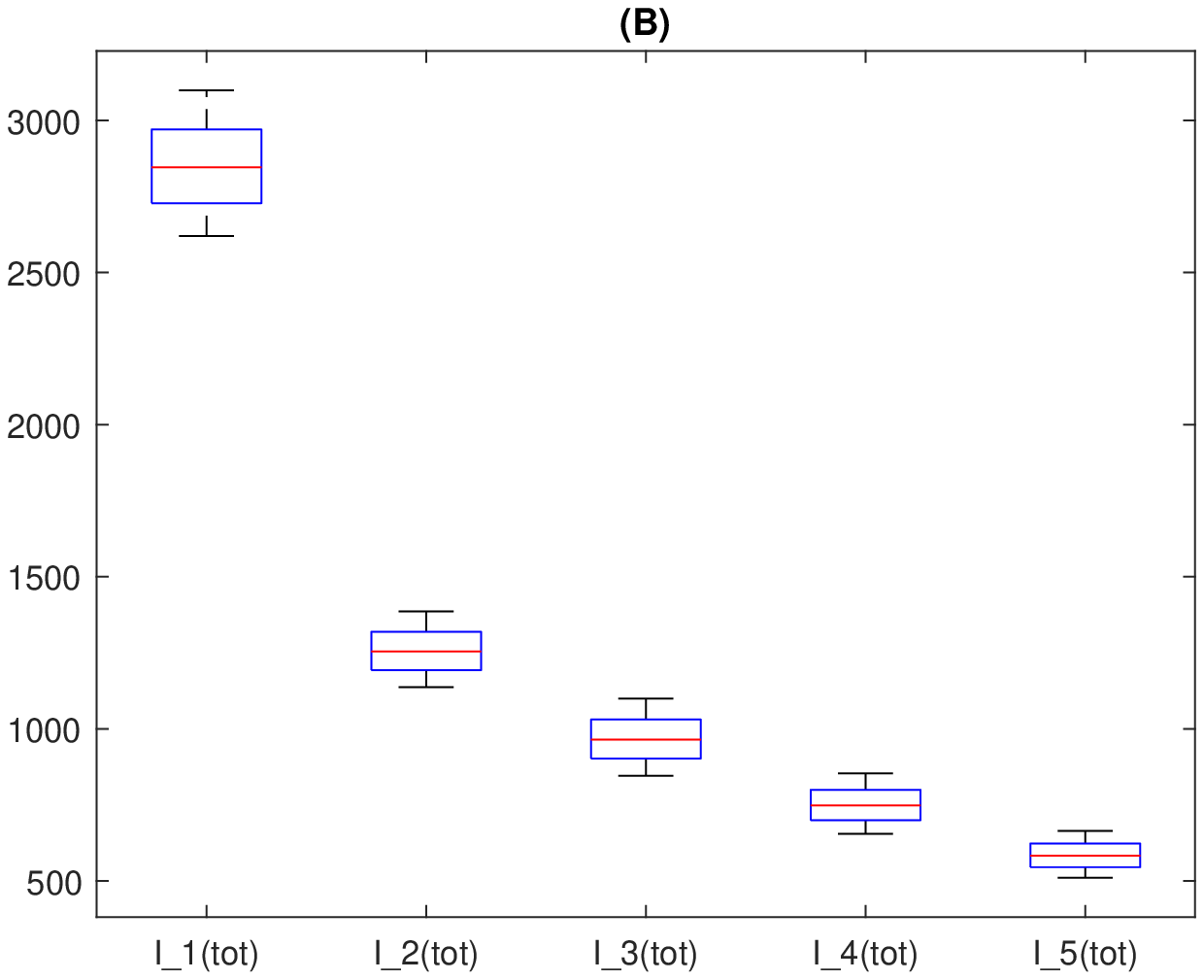}
    \includegraphics[width=0.32\textwidth]{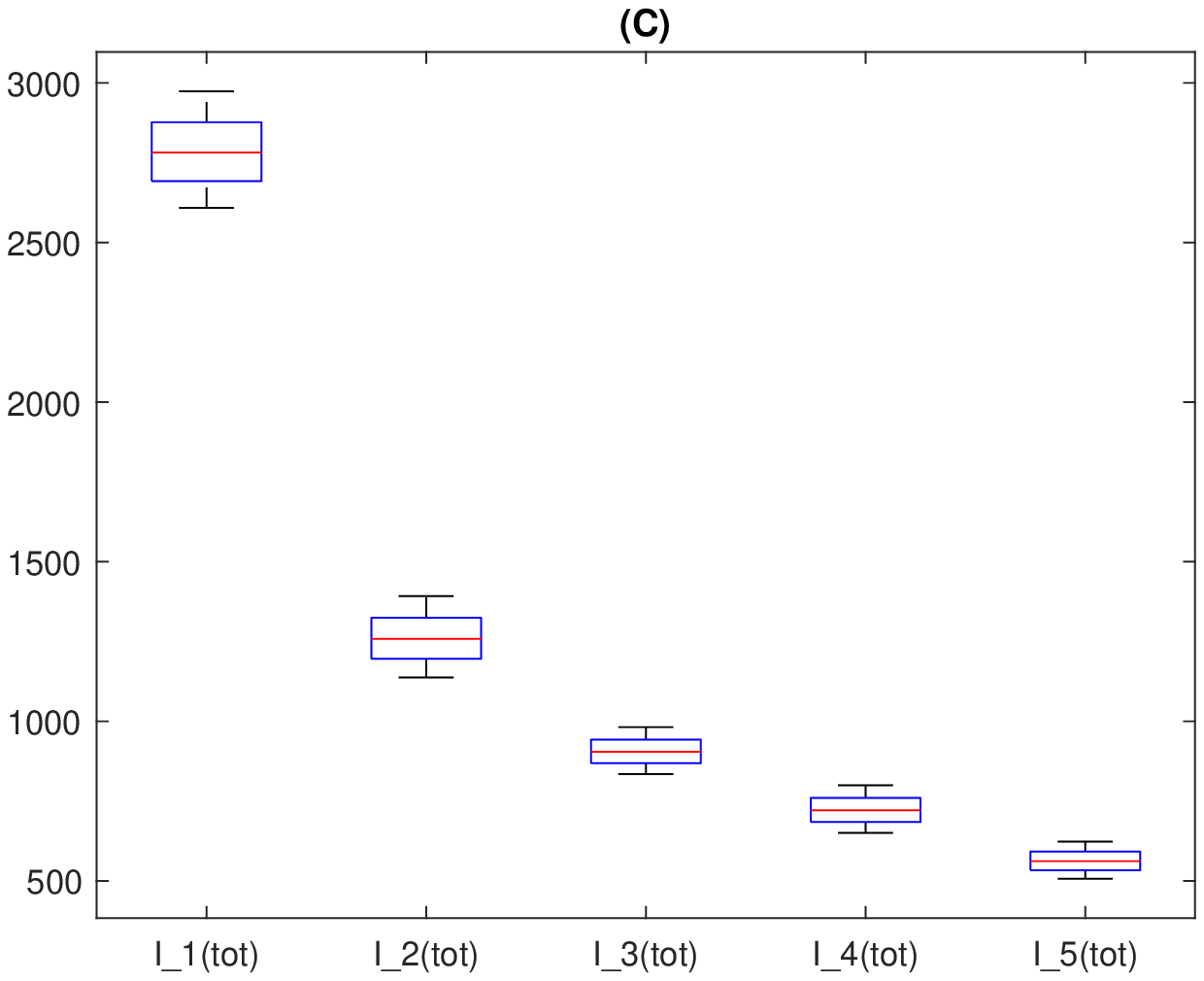}
    \includegraphics[width=0.32\textwidth]{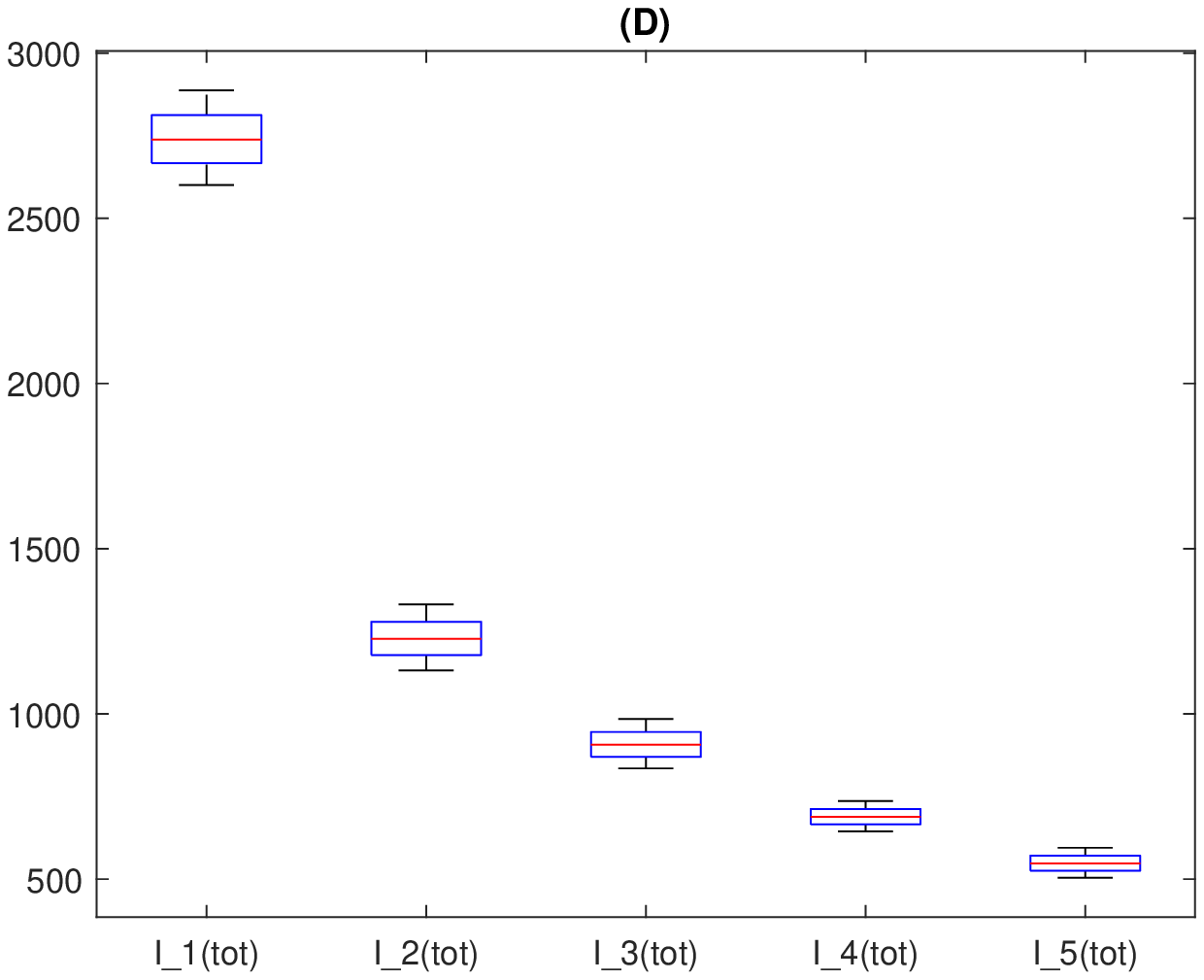}
    \includegraphics[width=0.32\textwidth]{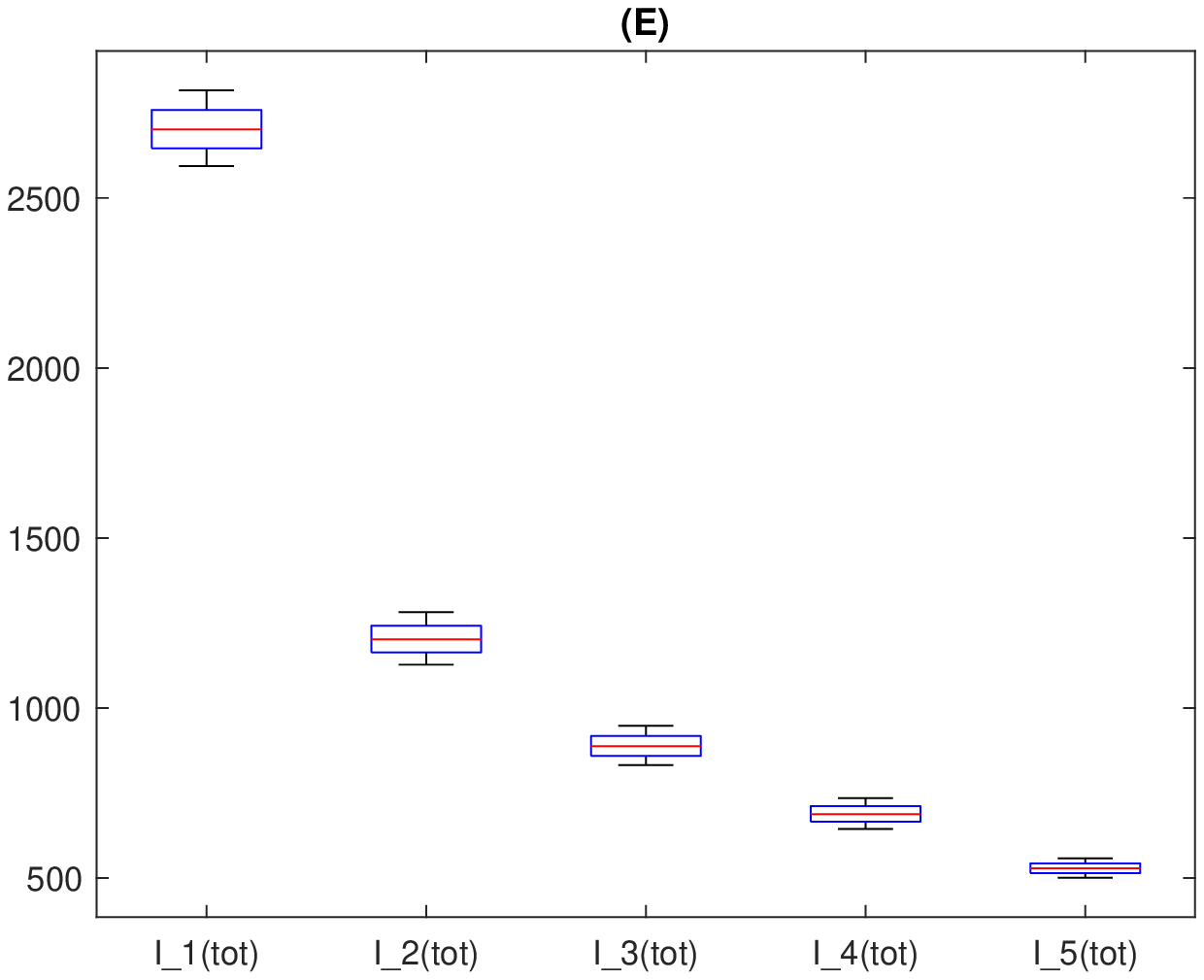}
    \caption{Box plots of total infectious people for different values of the transmission coefficient during travel.}
    \label{fig:alpha_box_plots}
\end{figure}

\subsection{Effect of coupling strength and transmission coefficients}
We draw contour plots of basic reproduction number ($R_0$) with respect to the coupling strength ($\epsilon$) and transmission coefficient during travel in patch 1 ($\alpha_1$) to investigate their effects. The contour plot is depicted in left panel of Fig. \ref{fig:countour_R0}. Nonlinear relations are observed for the variation in $\epsilon$, initially $R_0$ increases with increase in $\epsilon$ and after a certain value is crossed the relation gets reversed. From this figure, it is also observed that $\epsilon$ has a dominant effect on $R_0$ over $\alpha_1$. On the other hand, transmission coefficient in patch 1 ($\beta_1$) has a positive impact on $R_0$ (see Fig. \ref{fig:countour_R0}(B)). However, increasing $\epsilon$ shows a similar effect as that of Fig. \ref{fig:countour_R0}(A). 
 
\begin{figure}
    \centering
    \includegraphics[width=0.45\textwidth]{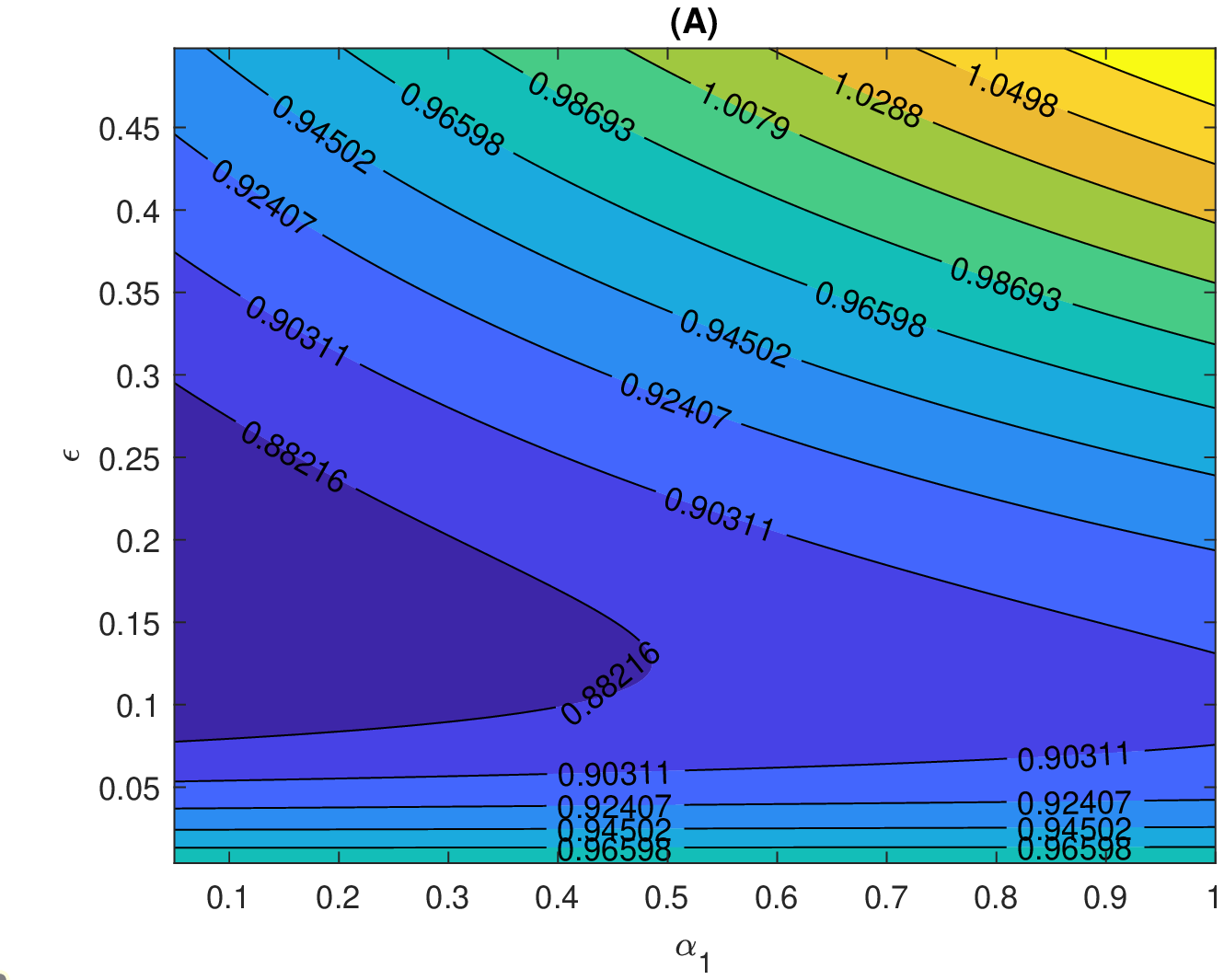}
    \includegraphics[width=0.45\textwidth]{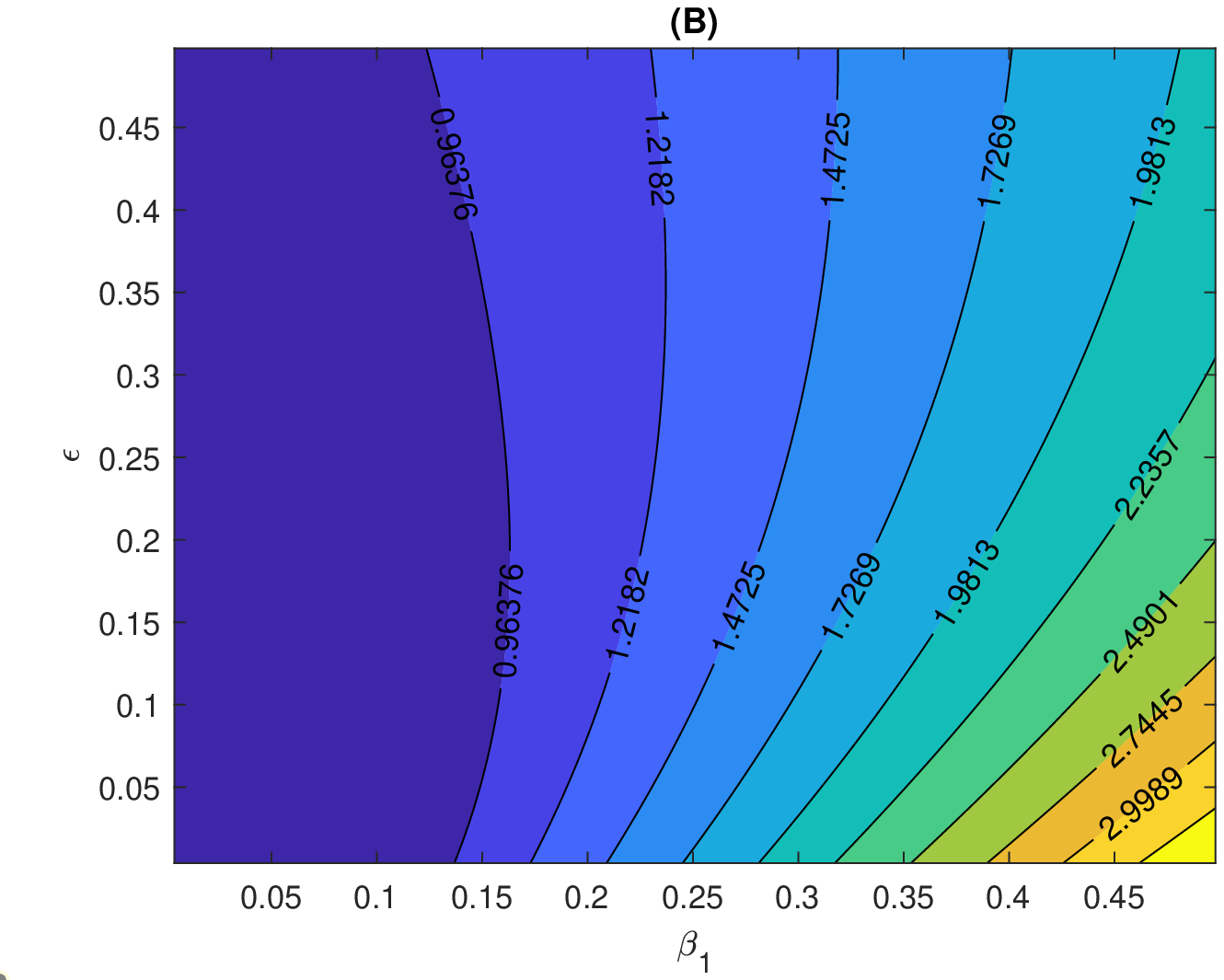}
    \caption{Contour plots of the basic reproduction number with respect to (A) $\epsilon$ and $\alpha_1$ and (B) $\epsilon$ and $\beta_1$. The fixed parameters are taken from Table \ref{Tab:table_parameters}, other fixed parameters are mentioned in the beginning of this section and we take $(\beta_2, \beta_3, \beta_4, \beta_5)=(0.12, 0.1, 0.08, 0.06)$, $( \alpha_2, \alpha_3, \alpha_4, \alpha_5)=( 0.35, 0.3, 0.25, 0.2)$, $0< \beta_1 < 0.5$, $0< \alpha_1 < 1$ and $0< \epsilon < 0.5$.}
    \label{fig:countour_R0}
\end{figure}

Further, we examine contour plots of total infected people in all the five patches ($\sum_{j=1}^{5} I_{i}(t)$) to clarify the effects of the parameters $\epsilon$, $\alpha_1$ and $\beta_1$. From Fig. \ref{fig:countour_infection}, it can be observed that the total infection has nonlinear relationships with the transmission parameters $\alpha_1$ and $\beta_1$ and the coupling strength $\epsilon$. However, for high values of coupling strength ($\epsilon > 0.25$), both the transmission coefficients have positive impact on the total infection. These contour plots reveal that the coupling strength is very crucial in disease spread and the impact may be counter-intuitive depending on situations. 

\begin{figure}
    \centering
    \includegraphics[width=0.45\textwidth]{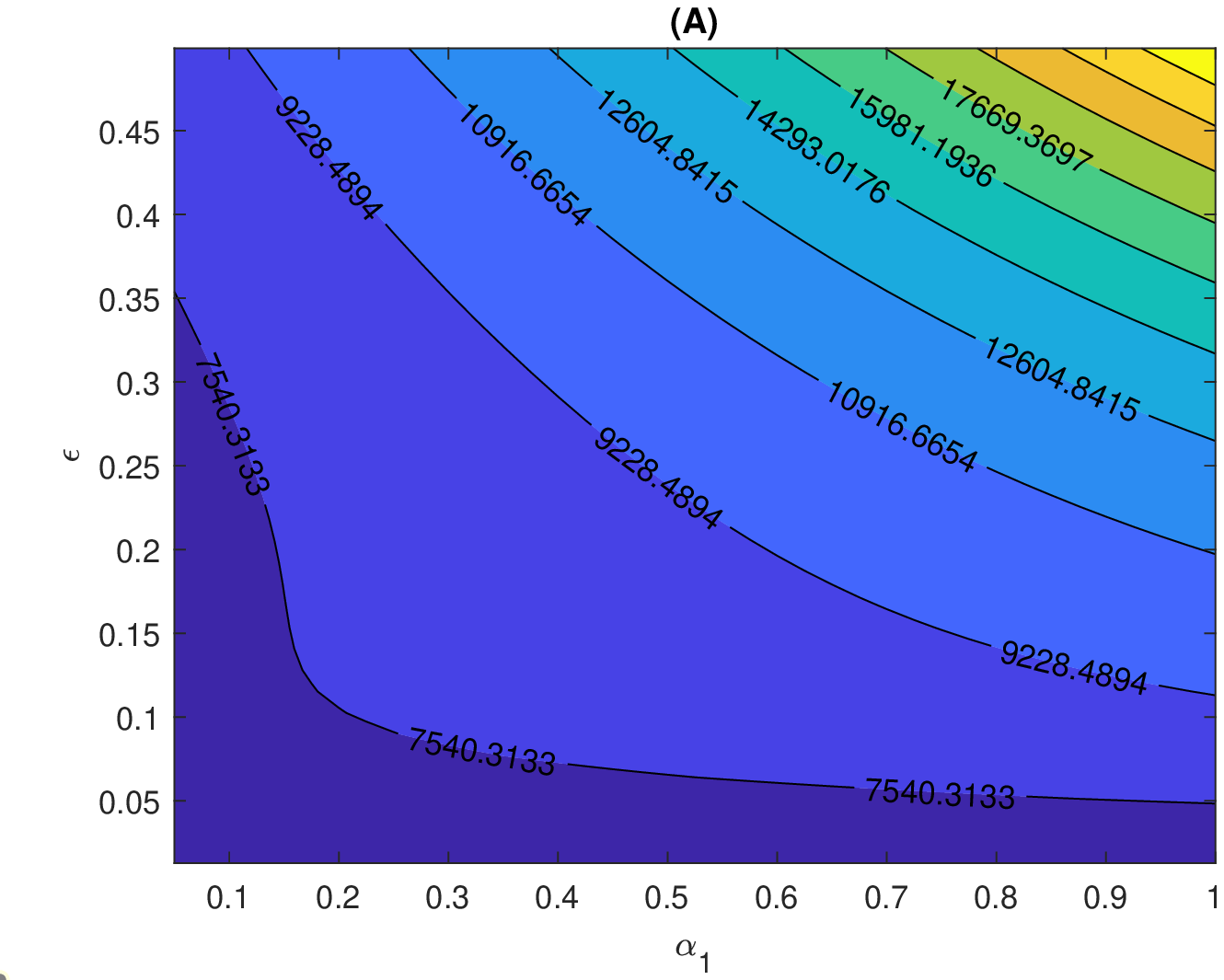}
    \includegraphics[width=0.45\textwidth]{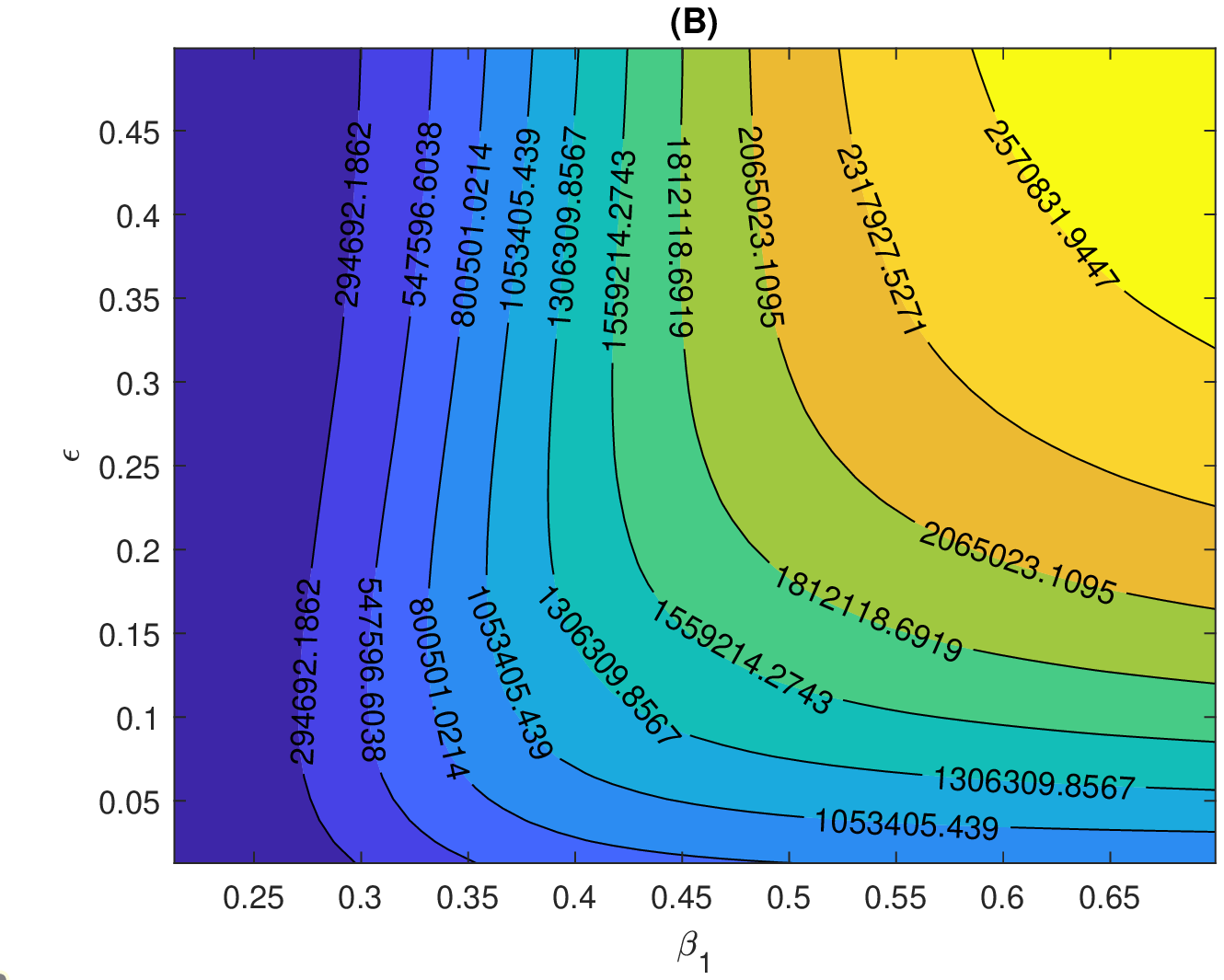}
    \caption{Contour plots of the total infectious individuals number with respect to (A) $\epsilon$ and $\alpha_1$ and (B) $\epsilon$ and $\beta_1$. The fixed parameters are taken from Table \ref{Tab:table_parameters}, other fixed parameters are mentioned in the beginning of this section and we take $(\beta_2, \beta_3, \beta_4, \beta_5)=(0.12, 0.1, 0.08, 0.06)$, $( \alpha_2, \alpha_3, \alpha_4, \alpha_5)=( 0.35, 0.3, 0.25, 0.2)$, $0< \beta_1 < 0.7$, $0< \alpha_1 < 1$ and $0< \epsilon < 0.5$.}
    \label{fig:countour_infection}
\end{figure}

\subsection{Exit screening scenarios}
Long distance passengers are checked for symptoms of an ongoing outbreak in a particular region. This checking or screening are necessary for respiratory disease outbreaks. Screening can be done before departure (exit screening) or after arrival (entry screening) to a certain place \cite{arino2016revisiting}. For instance, WHO requested all SARS-CoV affected areas to screen departing passengers for SARS-CoV symptoms from March to May 2003 \cite{john2005border}. Many Asian countries restarted these screenings during the 2009 pH1N1 spread. More recently, due to the high infectiousness of SARS-CoV-2 most of the countries have implemented exit screening as well as entry screening \cite{johansson2021reducing}. Thermal scanners and test-kits have been used to screen the passengers for any symptoms. Here we consider imperfect border screening is implemented in all the patches with a certain level of efficiency. We assume that infectious patients are screened and restricted from travelling while other people are allowed to travel. We introduce a parameter $\kappa$ which measures the efficacy of exit screening. After incorporation of imperfect exit screening at a rate $\kappa$, the model \eqref{EQ:eqn 2.1} become

\begin{eqnarray}\label{EQ:eqn 7.1}
\displaystyle{\frac{dS_i}{dt}} &=& \Pi_i- \beta_i \frac{I_i}{N_i} S_i - \mu_i S_i - \sum_{j=1}^{n} m_{ji} S_i + \sum_{j=1}^{n} m_{ij} (1-\frac{\alpha_j I_j}{N_j}) S_j,\nonumber \\
\displaystyle{\frac{dE_i}{dt}} &=& \beta_i \frac{I_i}{N_i} S_i -(\gamma_i+\mu_i)E_i - \sum_{j=1}^{n} m_{ji} E_i + \sum_{j=1}^{n} m_{ij} \frac{\alpha_j I_j}{N_j} S_j + \sum_{j=1}^{n} (1-\xi_i)m_{ij} E_j, \\
\displaystyle{\frac{dI_i}{dt}} &=& \gamma_i E_i - (\sigma_i+ \mu_i +\delta_i)I_i - \kappa \sum_{j=1}^{n} m_{ji} I_i + \sum_{j=1}^{n} \xi_i m_{ij} E_j + \kappa \sum_{j=1}^{n} (1-p_i) m_{ij} I_j, \nonumber \\
\displaystyle{\frac{dR_i}{dt}} &=& \sigma_i I_i - \mu_i  R_i - \sum_{j=1}^{n} m_{ji} R_i + \sum_{j=1}^{n} m_{ij} R_j + \kappa \sum_{j=1}^{n} p_i m_{ij} I_j, \nonumber
\end{eqnarray}

The time evolution of the infected compartments of different patches are depicted in Fig. \ref{fig:exit_screening}. The screening efficacy $\kappa$ is taken to be $\kappa =1$ for no screening and decreasing values of $\kappa$ indicate increasing screening efficacy. All the panels in Fig. \ref{fig:exit_screening} show similar trends of decrease in infection prevalence. However, since the infection prevalence is higher in patch 1, exit screening will have negligible effect in the prevalence. To quantify the effects of different exit screening levels, we now calculate the percentage reduction in the infection during 1000 days from infection onset. The percentage reduction is computed using the following basic formula

\begin{eqnarray*}
	\textmd{Percentage reduction in infected persons}=\frac{\textmd{Base value of infected persons} -
		\textmd{Model output}}{\textmd{Base value of infected persons}}\times 100.
\end{eqnarray*}

\begin{figure}
    \centering
    \includegraphics[width=0.45\textwidth]{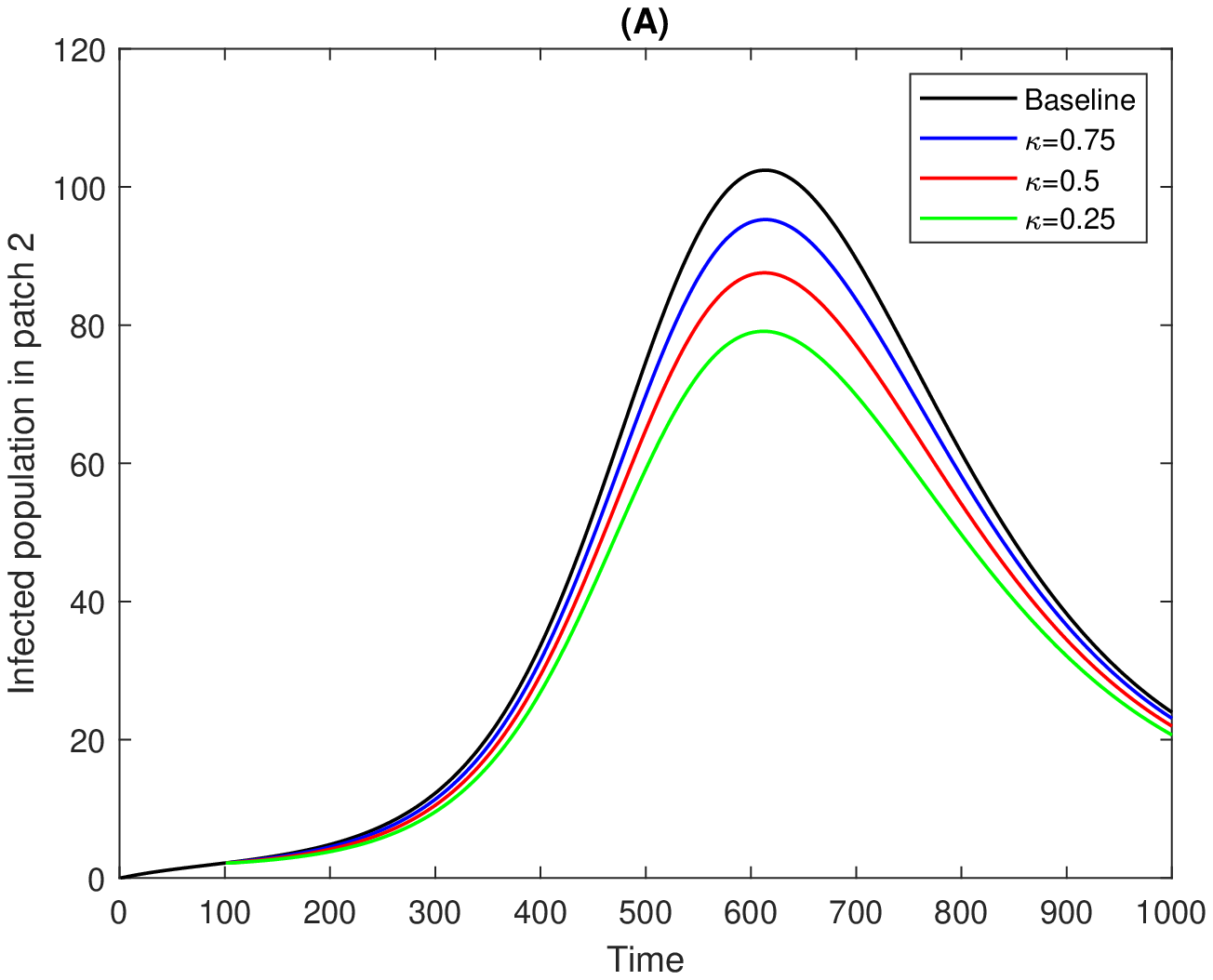}
    \includegraphics[width=0.45\textwidth]{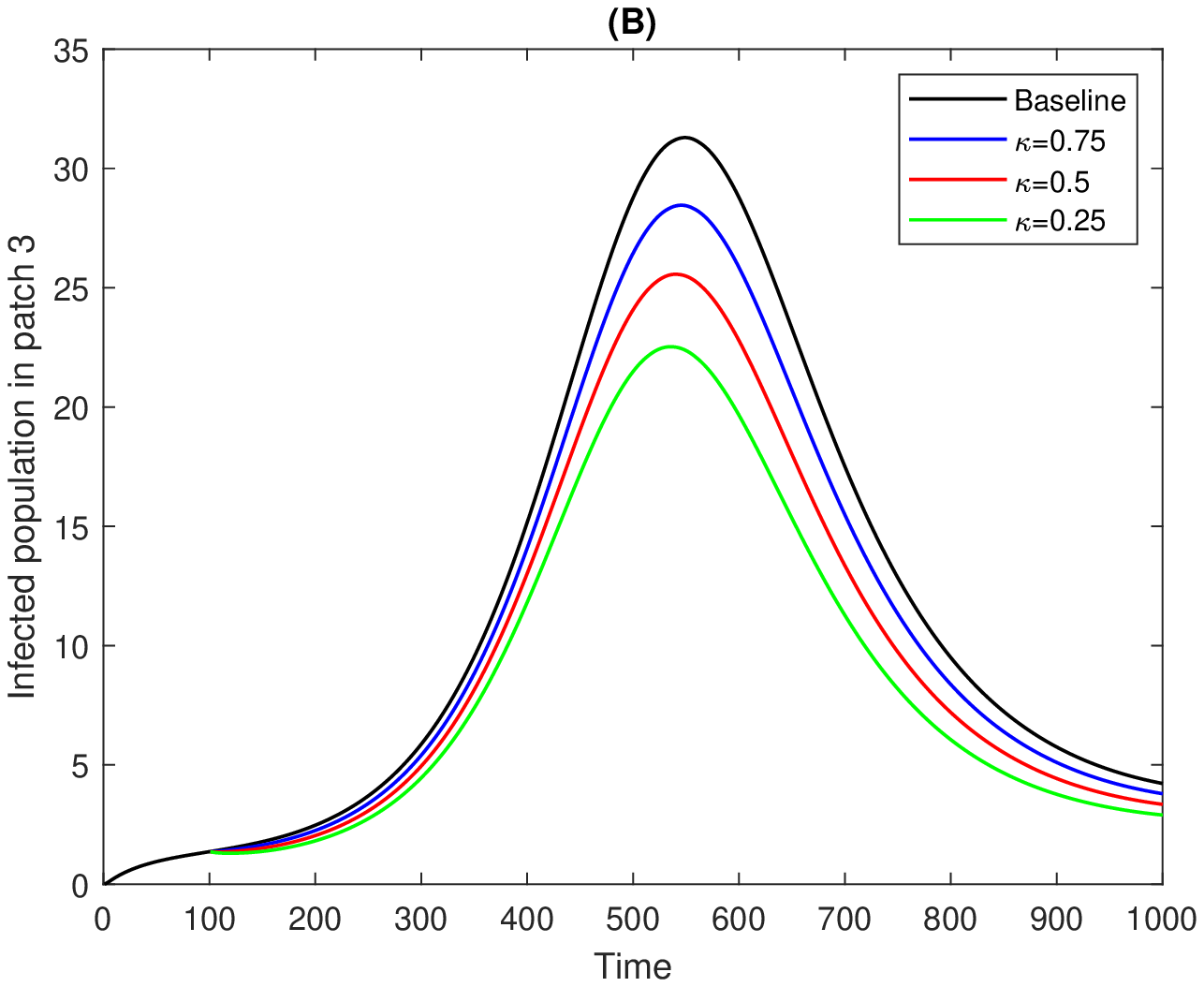}
    \includegraphics[width=0.45\textwidth]{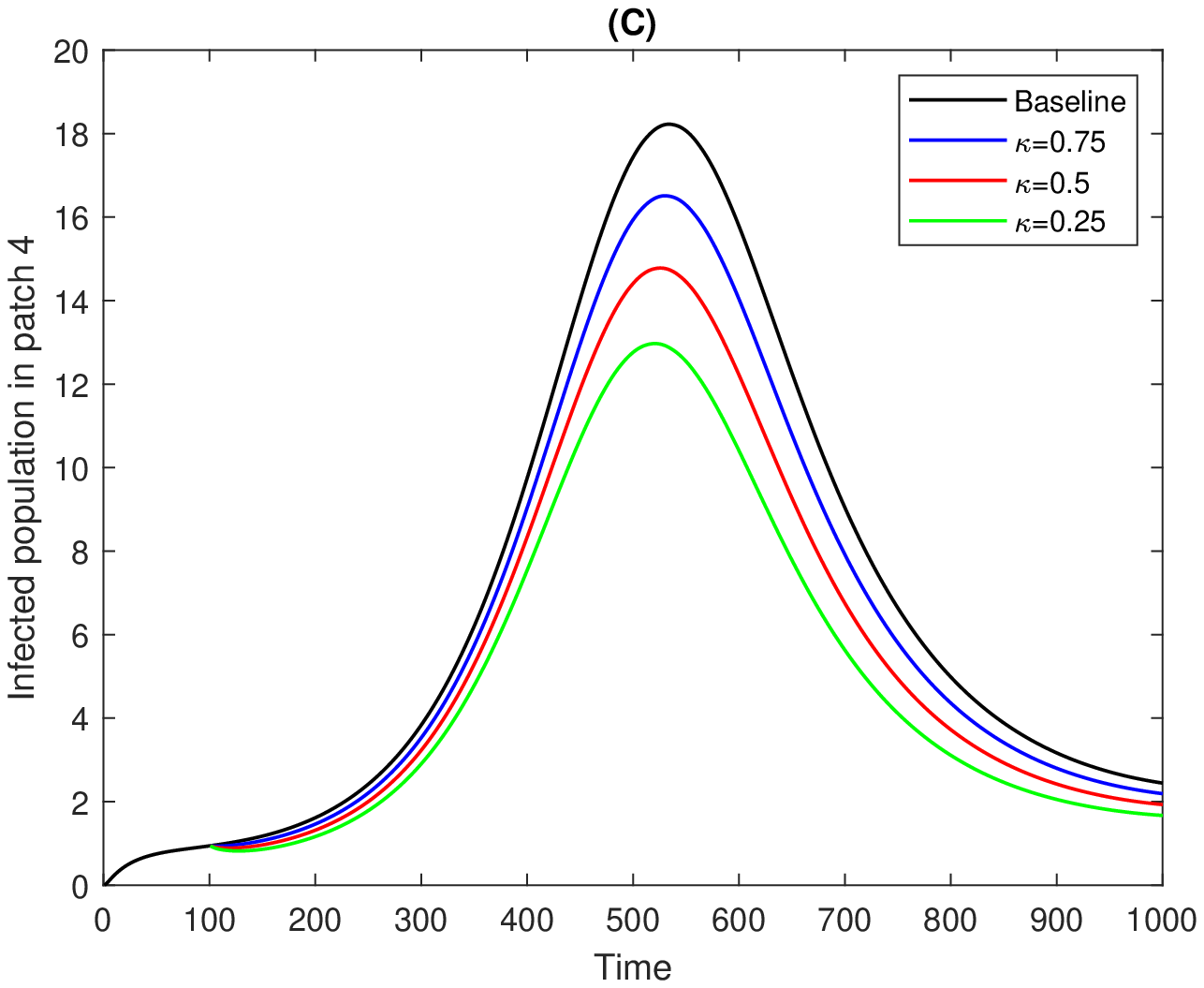}
    \includegraphics[width=0.45\textwidth]{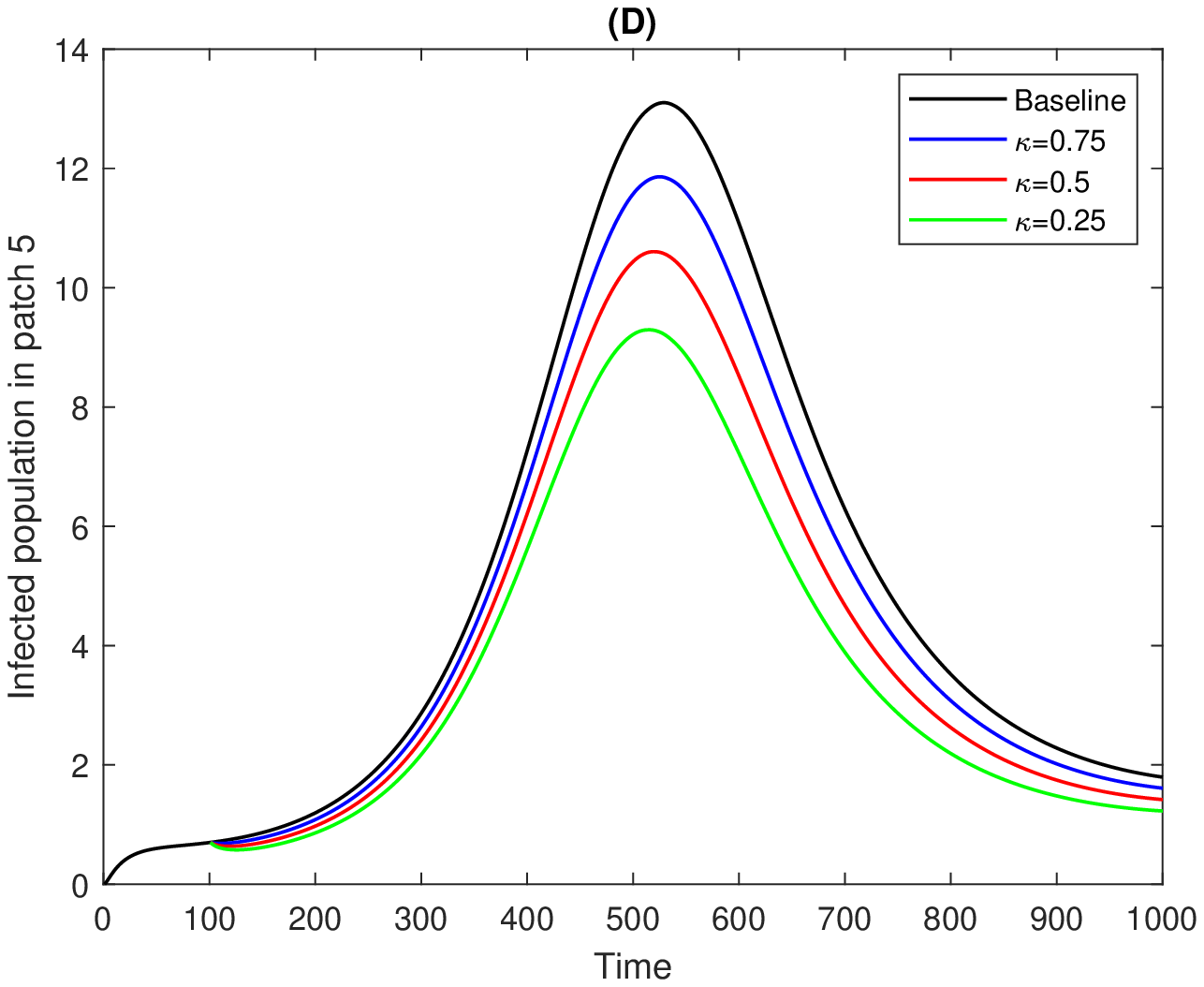}
    \caption{Effects of different imperfect exit screening levels on infectious people. The fixed parameters are taken from Table \ref{Tab:table_parameters}, other fixed parameters are mentioned in the beginning of this section and we take $(\beta_1, \beta_2, \beta_3, \beta_4, \beta_5)=(0.14, 0.12, 0.1, 0.08, 0.06)$, $(\alpha_1, \alpha_2, \alpha_3, \alpha_4, \alpha_5)=(0.4, 0.35, 0.3, 0.25, 0.2)$ and $\epsilon = 0.01$.}
    \label{fig:exit_screening}
\end{figure}

The percentage reductions are reported in Table \ref{Tab:percent-reduction-Table}.
It can be observed that the infection in the patch 1 will increase for different exit screening scenarios. However, all the other patches show significant decrease in cases. This indicate that in a metapopulation, the patch with high prevalence and high transmission rate will not benefit from the exit screening. But, the collectively it can be inferred that the infection will show decreasing trends if exit screening is implemented. 

\begin{table}[h]	
	\centering
	\caption{\bf{Percentage reduction in infectious people for different exit screening scenarios.}}\vspace{0.3cm}
	\begin{tabular}{ccccccc} \hline
		\textbf{Parameter} & \textbf{Values} & \textbf{$I_1(t)$} & \textbf{$I_2(t)$} & \textbf{$I_3(t)$} & \textbf{$I_4(t)$} & \textbf{$I_5(t)$} \\ \hline
		$\kappa$ & 0.75 & -1.13 & 6.22 & 9.50 & 9.55 & 10.05 \\ 
		 & 0.5 & -2.30 & 13.03 & 19.27 & 20.09 & 20.29 \\  
		 & 0.25 & -3.45 & 20.59 & 29.37 & 30.50 & 30.77 \\
		\hline
	\end{tabular}
	\label{Tab:percent-reduction-Table}
\end{table}

\subsection{Emergence of a new strain}
In this subsection, we numerically study the effect of a new strain of the virus on the overall population dynamics. Respiratory disease causing viruses such as influenza, SARS-CoV, MERS-CoV, SARS-CoV-2 etc are mostly RNA viruses and are prone to mutations \cite{lyons2018mutation,garcia2021multiple}. Therefore, it is important to investigate multi-strain dynamics of the metapopulation. We consider a simple two strain model for the propagation of the virus (see Fig. \ref{fig:flow_diagram_two_strain}) in each patch \cite{khyar2020global}.  

\begin{figure}[h]
    \centering
    \includegraphics[width=0.75 \textwidth]{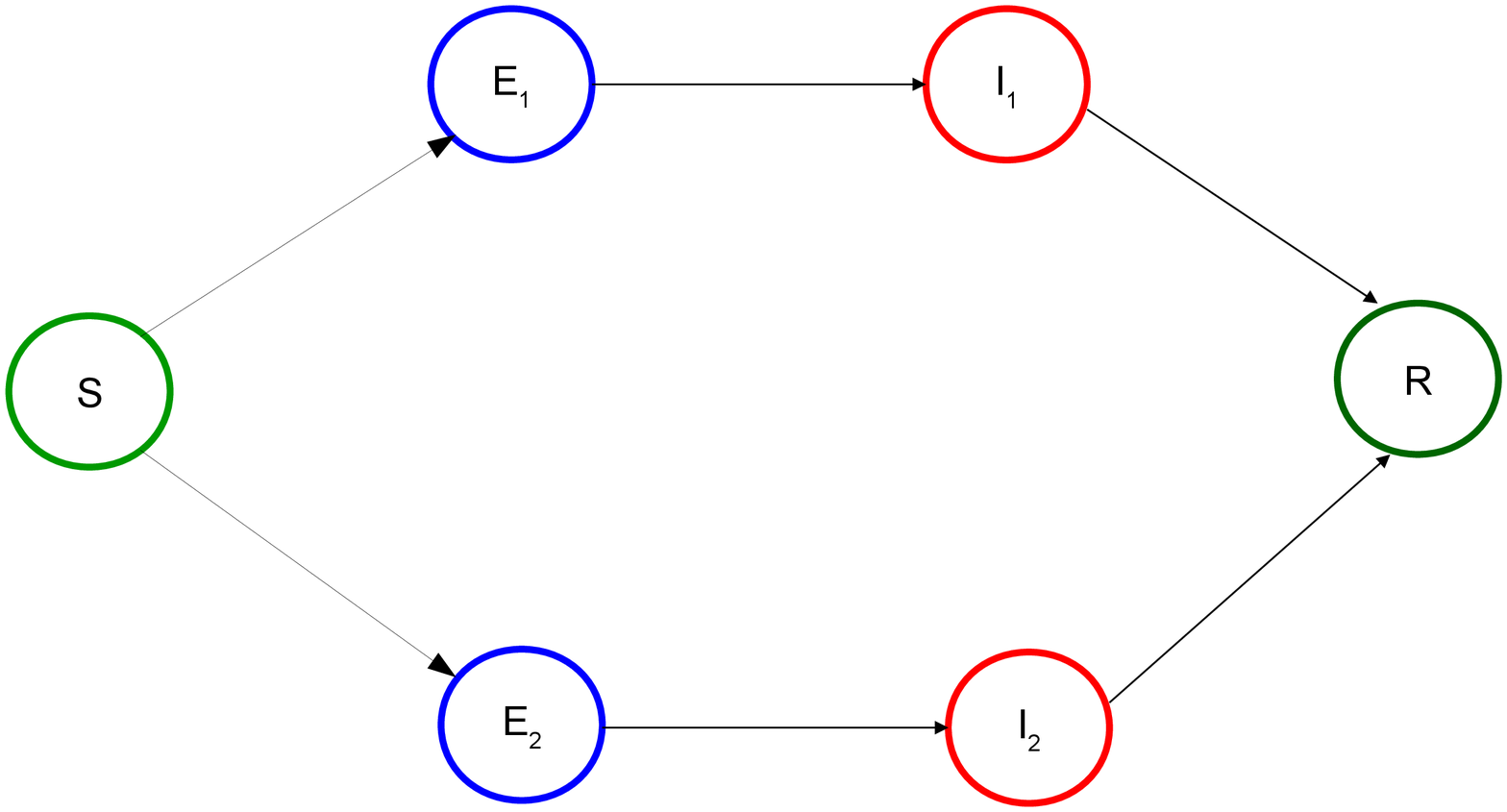}
    \caption{Flow diagram of an SEIR model with two strains co-circulating in a single patch. Solid black arrows depict usual progressions of a compartmental SEIR model.}
    \label{fig:flow_diagram_two_strain}
\end{figure}

We assume that the patches are connected through migration and the transmission of both the strains are possible during travel. Further, the transmission rates for both the strains of the virus are taken to be different whereas all other parameters are taken to be same. Thus, taking these assumptions into consideration, the system \eqref{EQ:eqn 2.1} becomes the following system of equations

\begin{eqnarray}\label{EQ:eqn 8.1}
\displaystyle{\frac{dS_i}{dt}} &=& \Pi_i-  \frac{\beta^1_i I^1_i + \beta^2_i I^2_i}{N_i} S_i - \mu_i S_i - \sum_{j=1}^{n} m_{ji} S_i + \sum_{j=1}^{n} m_{ij} (1-\frac{\alpha^1_j I^1_j + \alpha^2_j I^2_j}{N_j}) S_j,\nonumber \\
\displaystyle{\frac{dE^1_i}{dt}} &=& \beta^1_i \frac{I^1_i}{N_i} S_i -(\gamma_i+\mu_i)E^1_i - \sum_{j=1}^{n} m_{ji} E^1_i + \sum_{j=1}^{n} m_{ij} \frac{\alpha^1_j I^1_j}{N_j} S_j + \sum_{j=1}^{n} (1-\xi_i)m_{ij} E^1_j, \nonumber \\ 
\displaystyle{\frac{dE^2_i}{dt}} &=& \beta^2_i \frac{I^2_i}{N_i} S_i -(\gamma_i+\mu_i)E^2_i - \sum_{j=1}^{n} m_{ji} E^2_i + \sum_{j=1}^{n} m_{ij} \frac{\alpha^2_j I^2_j}{N_j} S_j + \sum_{j=1}^{n} (1-\xi_i)m_{ij} E^2_j, \\
\displaystyle{\frac{dI^1_i}{dt}} &=& \gamma_i E^1_i - (\sigma_i+ \mu_i +\delta_i)I^1_i - \sum_{j=1}^{n} m_{ji} I^1_i + \sum_{j=1}^{n} \xi_i m_{ij} E^1_j + \sum_{j=1}^{n} (1-p_i) m_{ij} I^1_j, \nonumber \\
\displaystyle{\frac{dI^2_i}{dt}} &=& \gamma_i E^2_i - (\sigma_i+ \mu_i +\delta_i)I^2_i - \sum_{j=1}^{n} m_{ji} I^2_i + \sum_{j=1}^{n} \xi_i m_{ij} E^2_j + \sum_{j=1}^{n} (1-p_i) m_{ij} I^2_j, \nonumber \\
\displaystyle{\frac{dR_i}{dt}} &=& \sigma_i (I^1_i + I^2_i) - \mu_i  R_i - \sum_{j=1}^{n} m_{ji} R_i + \sum_{j=1}^{n} m_{ij} R_j + \sum_{j=1}^{n} p_i m_{ij} (I^1_j + I^2_j), \nonumber
\end{eqnarray}

The model \eqref{EQ:eqn 8.1} is simulated using parameter values from Table \ref{Tab:table_parameters} and other parameters are taken as $\epsilon = 0.01$ 
$(\alpha^1_1, \alpha^1_2, \alpha^1_3, \alpha^1_4, \alpha^1_5)=( 0.16, 0.14, 0.12, 0.1,0.08)$, $(\alpha^2_1, \alpha^2_2, \alpha^2_3, \alpha^2_4, \alpha^2_5)=( 0.2, 0.2, 0.2, 0.2,0.2)$, $d_{12}=d_{21}=100$, $d_{13}=d_{31}=110$, $d_{14}=d_{41}=120$, $d_{15}=d_{51}=130$, $d_{23}=d_{32}=140$, $d_{24}=d_{42}=150$, $d_{25}=d_{52}=160$, $d_{34}=d_{43}=170$, $d_{35}=d_{53}=180$, $d_{45}=d_{54}=190$ and $(\nu_1, \nu_2, \nu_3, \nu_4, \nu_5)=(7, 6, 5, 4, 3)$. $\beta^1_i$ and $\beta^2_i$ are varied in the range $(0.1,0.4)$. Initially, we examine the time evolution of both the strains. To this end, we simulate the model \eqref{EQ:eqn 2.1} with different transmission rates and initial conditions $(S_1(0), E_1(0), I_1(0), R_1(0))=(100000, 100, 10, 0)$ and $(S_i(0), E_i(0), I_i(0), R_i(0))=(100000, 0, 0, 0)$ for i=2,3,4,5. After 200 days, a second strain is seeded in patch 3. To do this numerically, we take the end points of the state variables from the 200$^{th}$ day and plug in them to the system \eqref{EQ:eqn 8.1} along with the number of strain 2 infected in patch 3 is 1. However, the initial strain 2 infected persons ($I^2_i$, i=2,3,4,5) in other patches and strain 2 exposed persons ($E^2_i$, i=1,2,3,4,5) in all patches are taken to be zero. Depending on the transmission coefficients $\beta^1_i$ and $\beta^2_i$, we observe that either the first strain infected persons persist (see Fig. \ref{fig:first_strain_persist}) or the second strain infected persons persist (see Fig. \ref{fig:second_strain_persist}).

\begin{figure}[h]
    \centering
    \includegraphics[width=0.45\textwidth]{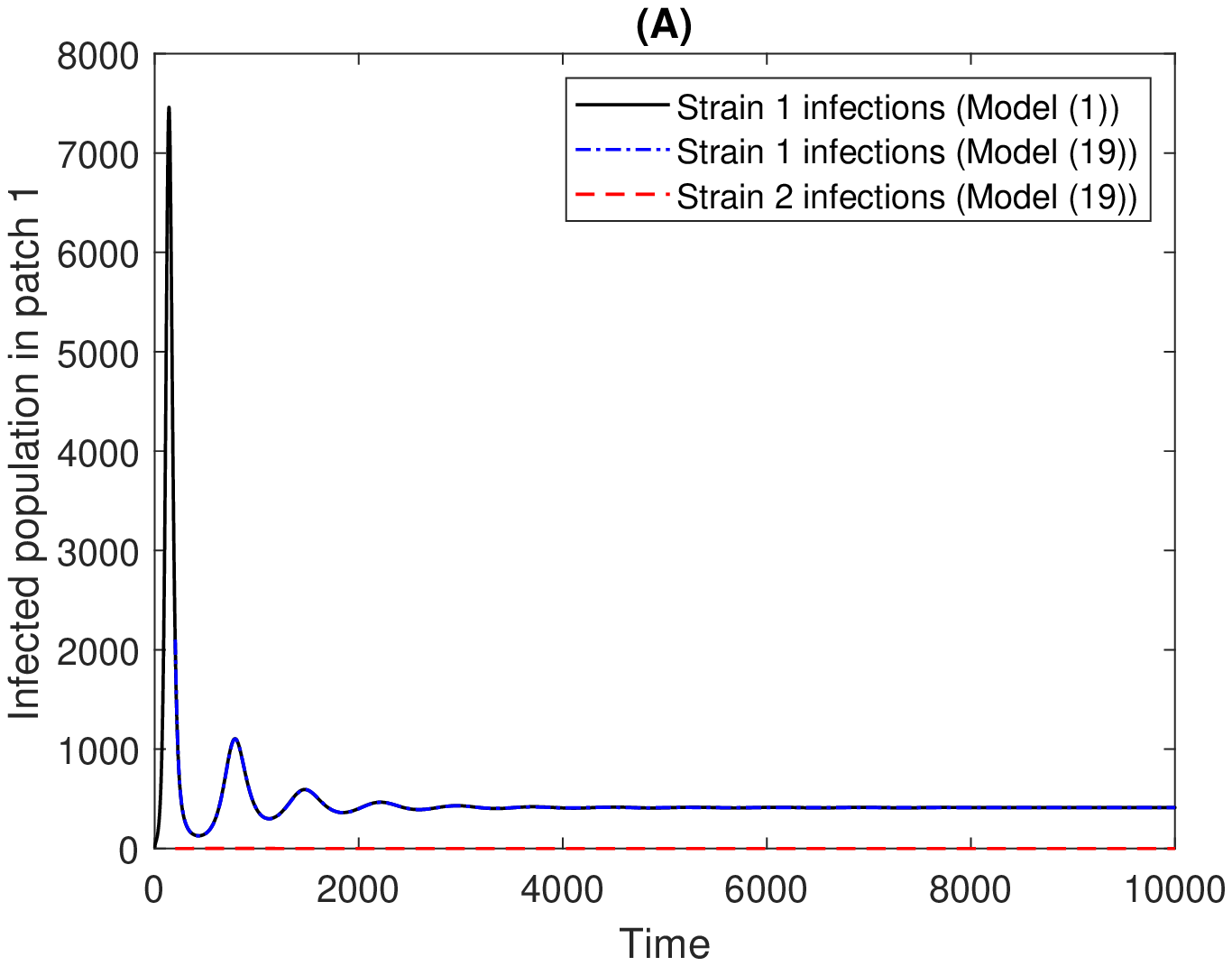}
    \includegraphics[width=0.45\textwidth]{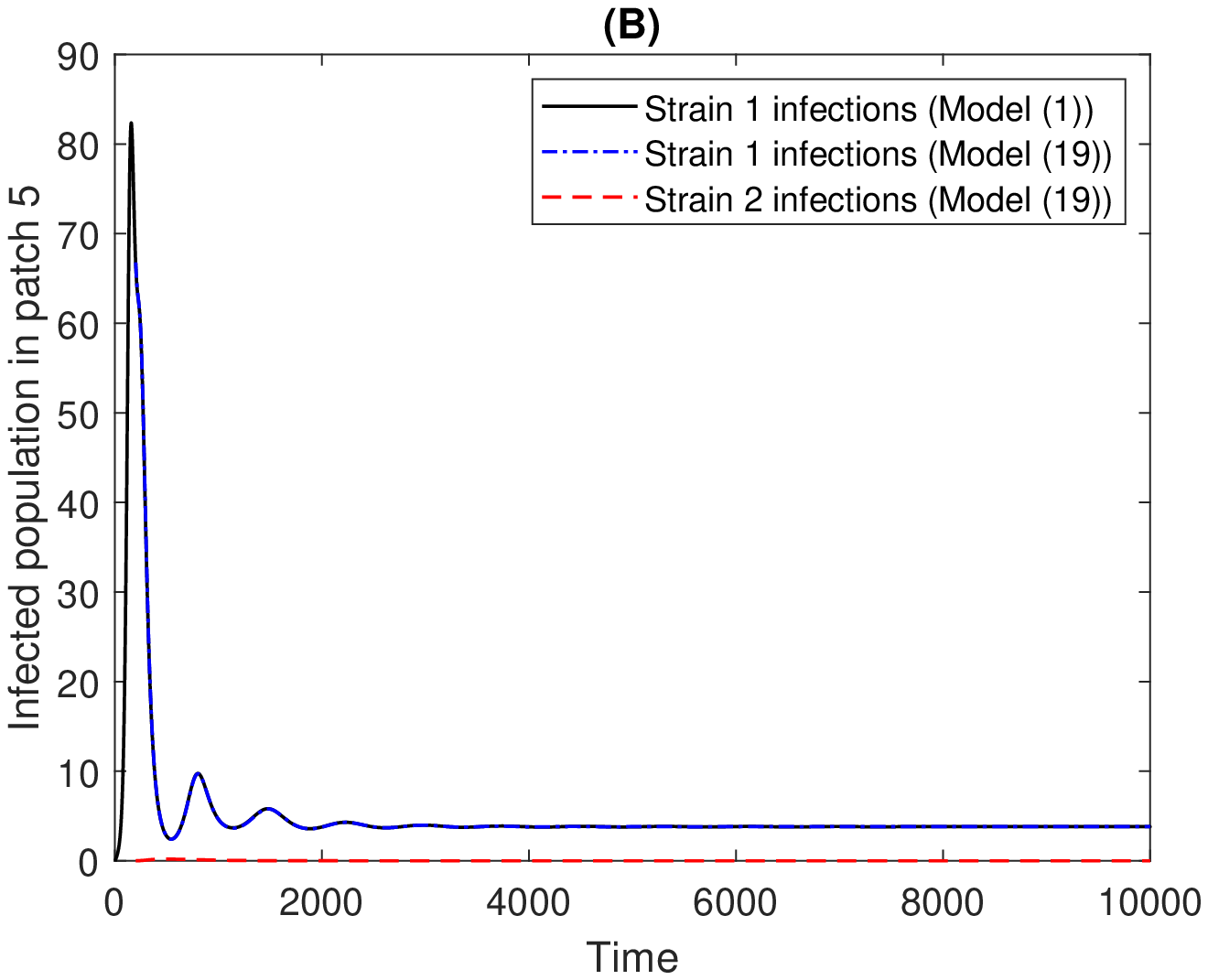}
    \caption{Persistence of first strain in (A) patch 1 and (B) patch 5. The transmission coefficients are $(\beta^1_1, \beta^1_2, \beta^1_3, \beta^1_4, \beta^1_5)=(0.24, 0.2, 0.16, 0.12,0.08)$ and $(\beta^2_1, \beta^2_2, \beta^2_3, \beta^2_4, \beta^2_5)=(0.2, 0.2, 0.2, 0.2,0.2)$.}
    \label{fig:first_strain_persist}
\end{figure}

\begin{figure}[h]
    \centering
    \includegraphics[width=0.45\textwidth]{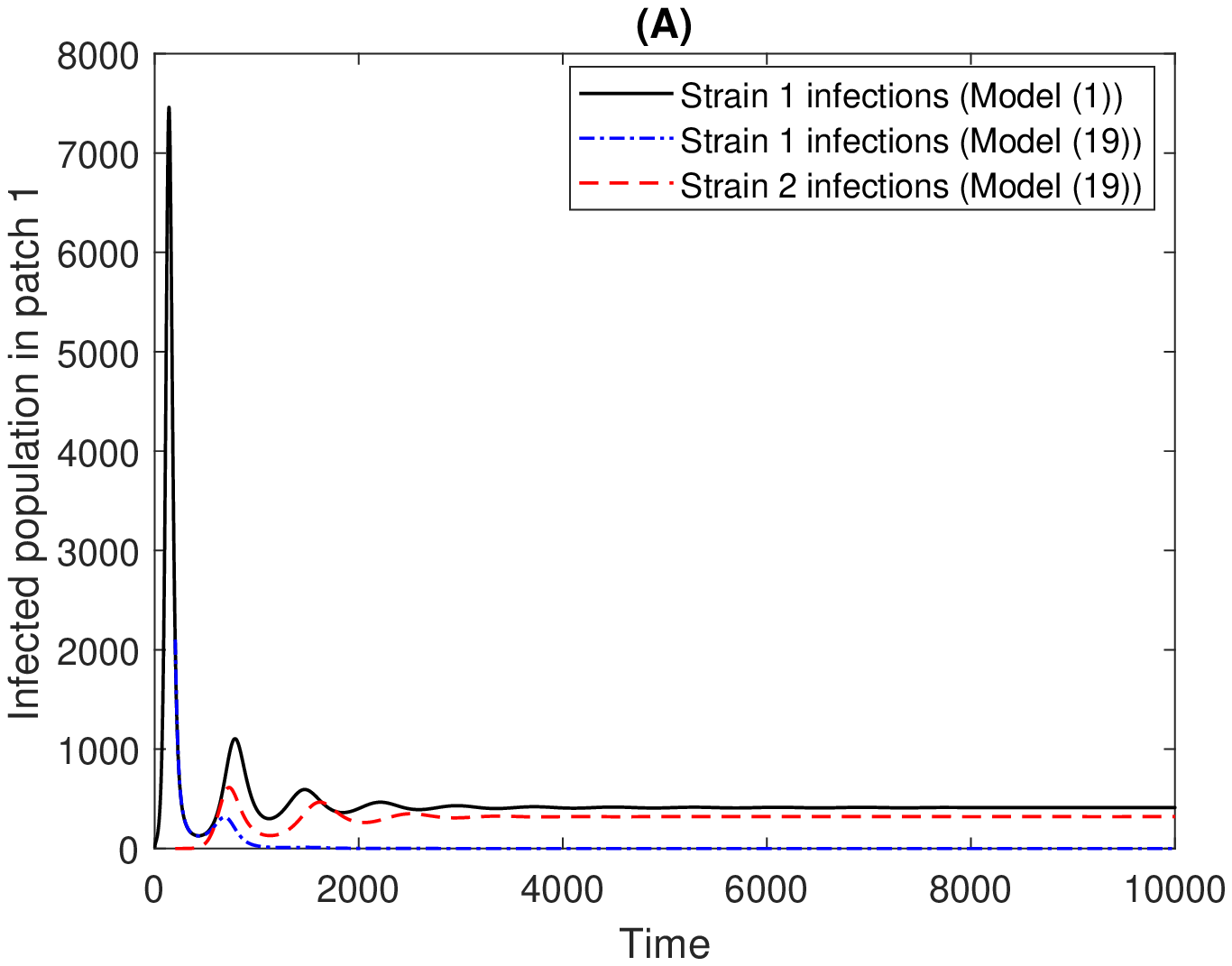}
    \includegraphics[width=0.45\textwidth]{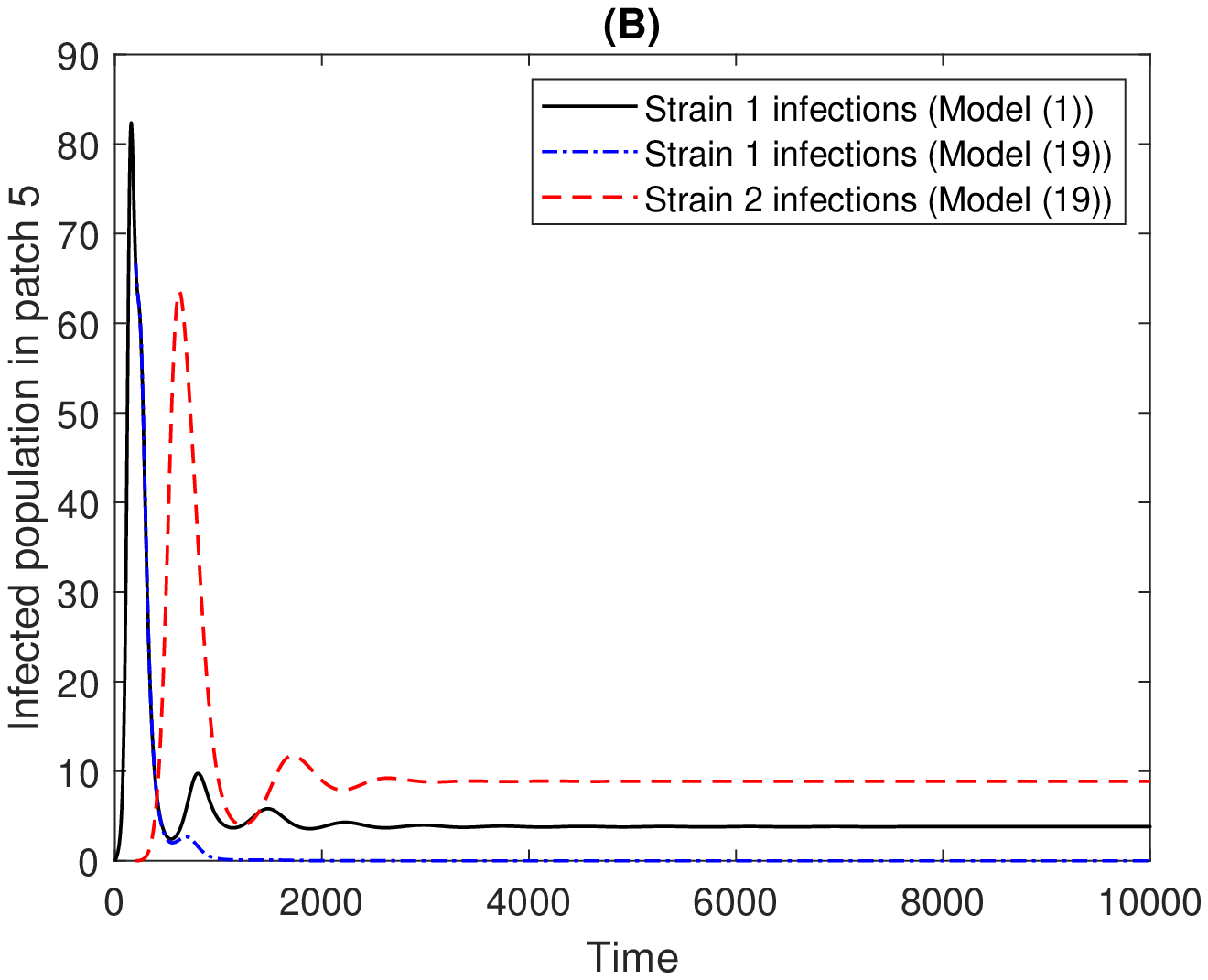}
    \caption{Persistence of strain 2 infection and extinction of strain 1 infections in (A) patch 1 and (B) patch 5. The transmission coefficients are $(\beta^1_1, \beta^1_2, \beta^1_3, \beta^1_4, \beta^1_5)=(0.24, 0.2, 0.16, 0.12,0.08)$ and $(\beta^2_1, \beta^2_2, \beta^2_3, \beta^2_4, \beta^2_5)=(0.25, 0.25, 0.25, 0.25,0.25)$.}
    \label{fig:second_strain_persist}
\end{figure}

It can be observed that the two co-circulating strains will not co-exist in the community in a long run. This effect is know as competitive exclusion principle in mathematical epidemiology \cite{martcheva2015introduction}. Fig. \ref{fig:second_strain_persist} depicts the persistence of second strain infection and extinction of strain 1 infection in patch 1 and patch 5. It can be observed that both the strains co-exist for a very short period. However, a transition is observed for both the patches i.e, infected persons with strain 1 
go to extinction after emergence of the strain 2 infection. Additionally, it is seen that strain 1 infection may have persisted if second strain was not introduced in the population. 

\begin{figure}[h]
    \centering
    \includegraphics[width=0.45\textwidth]{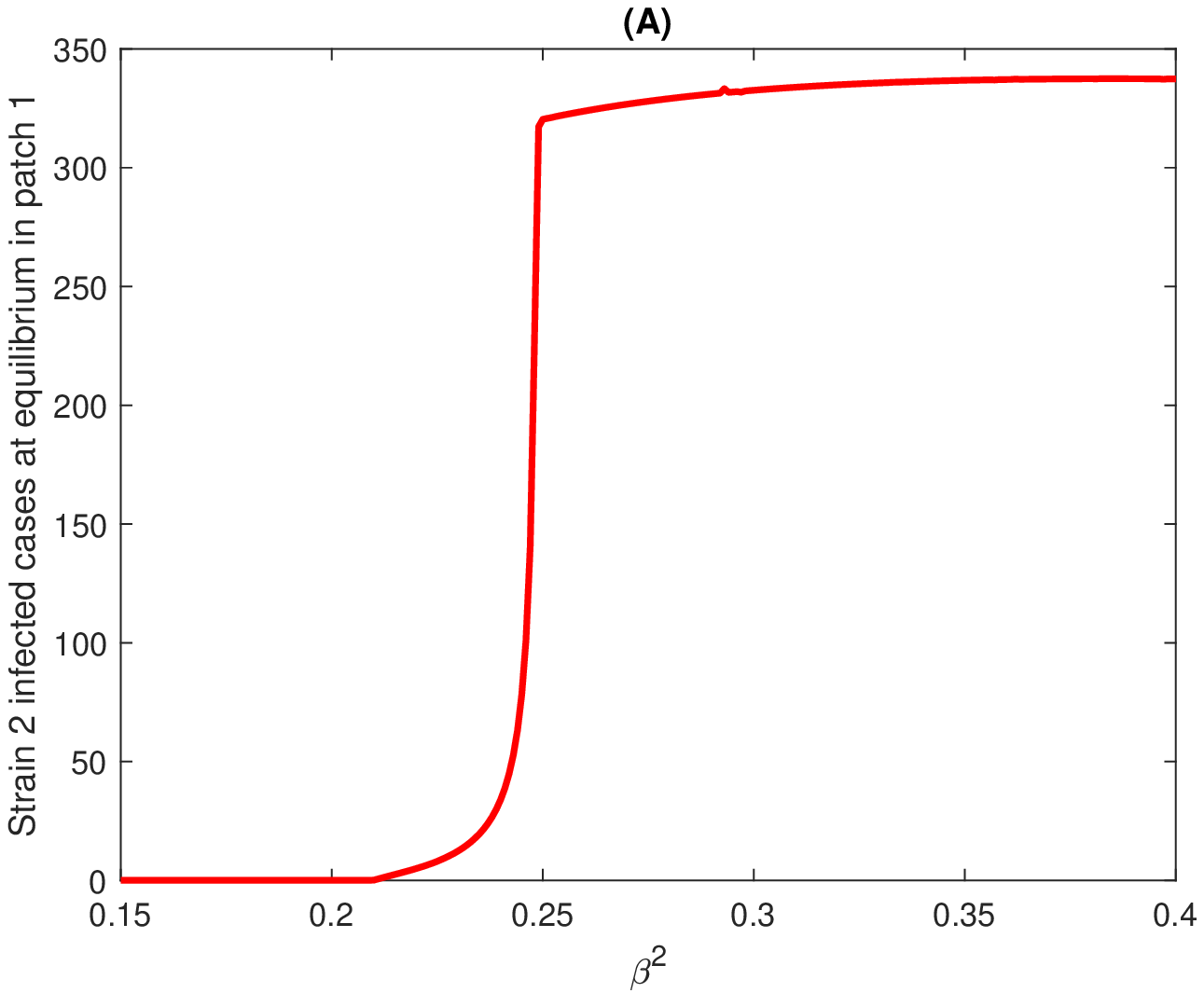}
    \includegraphics[width=0.45\textwidth]{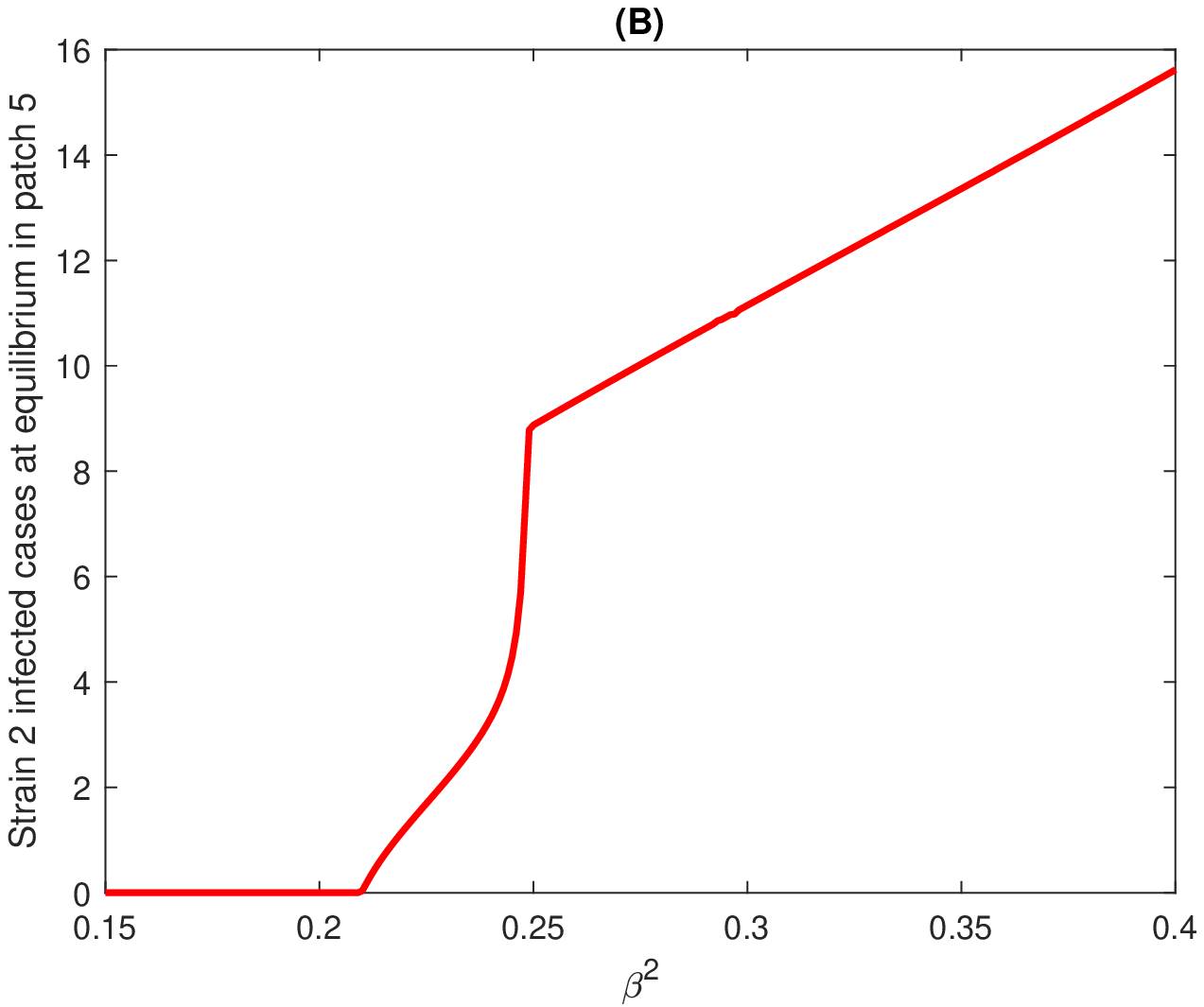}
    \caption{Dynamics of the strain 2 equilibrium density with different values of $\beta^2_i$ in (A) patch 1 and (B) patch 5.}
    \label{fig:threshold_beta2}
\end{figure}

Further, to investigate the behavior of the equilibrium value of the second strain with respect to the transmission coefficient $\beta^2_i$, we draw the bifurcation diagrams \ref{fig:threshold_beta2}. It can be observed that the equilibrium value remains at zero until $\beta^2_i$ cross a certain threshold. This gives rise to a forward transcritical bifurcation. This indicate that the strain 2 free equilibrium and strain 2 endemic equilibrium will change stability after a certain threshold value of $\beta^2_i$.

\section{Discussion}\label{section5}
In this paper, we propose and analyze a novel metapopulation model with infection during transport. We extend a two-patch SIS model of Arino et. al \cite{arino2016revisiting} to incorporate recovery and incubation during transport. We also consider the migration terms to be related to distance between patches and perceived severity of the disease. The general n-patch model is analyzed mathematically. Positivity and boundedness of the proposed model are established and an invariant region for the system is obtained. Existence of the unique disease-free equilibrium has been proved and the local stability of this equilibrium is governed by the basic reproduction number ($R_0$). Moreover, if the movement matrices satisfy certain conditions and $R_0 < 1$, then the DFE is globally asymptotically stable. Existence of an endemic equilibrium is also established under some conditions.

Extensive numerical experiments using different parameter sets are performed to get insight into the transmission process. Different migration matrices are considered to study the network topology such as fully connected, ring of patches or star-like network (as depicted in Fig. \ref{fig:patch_structures}). Keeping all the parameters fixed except the migration matrix, we simulate the system which reveal that different network topology does have a effect on the prevalence of the disease in different patches (see Fig. \ref{fig:patch_structures_ts}). Further numerical simulations suggest that infection during travel has the potential to move the equilibrium from disease free to an endemic state (see Fig. \ref{fig:with_without_IDT}). The transmission rates during travel ($\alpha_i$) are then varied individually to quantify their effects on disease prevalence. It is observed that all the $\alpha_i$'s play significant role in disease transmission (see Fig. \ref{fig:alpha_box_plots}). Furthermore, numerically we observe that depending on the value of coupling strength, transmission coefficients ($\beta_1$ and $\alpha_1$) may show nonlinear effects on $R_0$ and total number of infectives. The coupling strength also show some nonlinear relationships with both $R_0$ and total number of infectious persons (as seen in Fig. \ref{fig:countour_R0} and Fig. \ref{fig:countour_infection}). Epidemiologically, this indicate that while dealing real epidemic curves, modellers have to be very careful about the choice of coupling strengths. Modelling diseases in a metapopulation setting is inherently related to the migration rates between patches. From control strategic point of view, restricting migration of infectives between patches are studied for disease suppression. Thus, various exit screening scenarios were studied to get a better understanding of the system. We observe that implementation of imperfect exit screening in all the patches may have negative impact on the patch with high prevalence and high transmission rate (patch 1 in this case). However, all other patches experience significant reduction in infectious persons.  Finally, we studied the emergence of a new variant strain of the virus and its effect on the dynamics of the previous strain. It is observed that first strain is persistent or the second strain is persistent in the long run depending on the transmission coefficients. However, two strain will not co-exist in the population at equilibrium. The equilibrium values of the second strain also show threshold-like behaviour with respect to the transmission coefficients of the second strain. This indicate that transmission rates are very crucial for the persistence or extinction of the new strain in the community.

\section*{Acknowledgements}
Research of IG is supported by National Board for Higher Mathematics (NBHM) postdoctoral fellowship (Ref. No: 0204/3/2020/R \& D-II/2458). SSN receives senior research fellowship from CSIR, Government of India, New Delhi. The work of SR is partially supported by SERB MATRICS grants MTR/2020/000186 and MSC/2020/00028 of the Government of India.

\bibliographystyle{plain}
\biboptions{square}
\bibliography{bib_metapop}

\end{document}